\providecommand{\U}[1]{\protect\rule{.1in}{.1in}}
\newtheorem{theorem}{Theorem}
\newenvironment{proof}[1][Proof]{\noindent\textbf{#1.} }{\ \rule{0.5em}{0.5em}}
\begin{document}
\preprint{ }
\title[Natural gradient and parameter estimation for quantum Boltzmann machines]{Natural gradient and parameter estimation for quantum Boltzmann machines}
\author{Dhrumil Patel}
\affiliation{Department of Computer Science, Cornell University, Ithaca, New York 14850, USA}
\author{Mark M. Wilde}
\affiliation{School of Electrical and Computer Engineering, Cornell University, Ithaca, New
York 14850, USA}
\keywords{natural gradient, quantum Boltzmann machines, parameterized thermal states,
Fisher--Bures information matrix, Kubo--Mori information matrix, Hamiltonian parameter estimation}

\begin{abstract}
Thermal states play a fundamental role in various areas of physics, and they are becoming increasingly important in quantum information science, with applications related to semi-definite programming, quantum Boltzmann machine learning, Hamiltonian learning, and the related task of estimating the parameters of a Hamiltonian. Here we establish formulas underlying the basic geometry of parameterized thermal states, and we delineate quantum algorithms for estimating the values of these formulas. More specifically, we establish formulas for the Fisher--Bures and Kubo--Mori information matrices of parameterized thermal states, and our quantum algorithms for estimating their matrix elements involve a
combination of classical sampling, Hamiltonian
simulation, and the Hadamard
test. These results have applications in developing a natural gradient descent algorithm for quantum Boltzmann machine learning, which takes into account the geometry of thermal states, and in establishing fundamental limitations on the ability to estimate the parameters of a Hamiltonian, when given access to thermal-state samples. For the latter task, and for the special case of estimating a single parameter, we sketch an algorithm that realizes a measurement that is asymptotically optimal for the estimation task. We finally stress that the natural gradient descent algorithm developed here can be used for any machine learning problem that employs the quantum Boltzmann machine ansatz.

\end{abstract}
\date{\today}
\startpage{1}
\endpage{10}
\maketitle
\tableofcontents

\section{Introduction}

\subsection{Background}

Quantum states at thermal equilibrium, known as thermal states, play an
essential role in many areas of physics, including condensed
matter~\cite{Continentino2021}, high energy, and quantum
chemistry~\cite{DSL15}. They arise from a typical physical process by which a
system of interest is weakly coupled to an external environment at a finite
temperature~\cite[Appendix~A-1]{Alhambra2023}. Due to their ubiquitious
occurrence in nature, they can be used to predict various properties of
physical systems. They are also uniquely characterized as being states of
maximal entropy subject to an energy constraint~\cite{Jaynes1957,Jaynes1957a},
the states that minimize the free energy~\cite[Eqs.~(2)--(3)]{Giudice2021},
and completely passive states, whereby no energy can be extracted from them,
even when taking multiple copies~\cite[Section~1.2.2]{Binder2018}.

In more recent years, thermal states have intersected quantum information
science in various intriguing ways. First, quantum algorithms for semidefinite
programming make use of them in a fundamental way~\cite{Brandao2017} (see
also~\cite{QASDP19,AGGW17}). As a quantum generalization of classical
Boltzmann machines, the model of quantum Boltzmann machines has emerged as a
method of performing machine learning tasks on a quantum
computer~\cite{Amin2018,Benedetti2017,Kieferova2017}, with applications to
generative modeling~\cite{Coopmans2024,tuysuz2024learninggeneratehighdimensionaldistributions} and estimating ground-state
energies~\cite{patel2024quantumboltzmann}. Finally, there has been interest in
performing measurements on thermal states in order to learn the underlying
Hamiltonian~\cite{Anshu2021,bakshi2023learning,Rouze2024} or, related to this
goal, to estimate the parameters that uniquely identify a
Hamiltonian~\cite{GarciaPintos2024,abiuso2024fundamentallimitsmetrologythermal}. All of these applications, in addition to
their fundamental role in physics, have motivated the development of quantum
algorithms for preparing thermal
states~\cite{chen2023quantumthermalstatepreparation,chen2023efficient,bergamaschi2024,chen2024randomizedmethodsimulatinglindblad,rajakumar2024gibbssamplinggivesquantum,rouze2024efficient,bakshi2024hightemperature,ding2024polynomial}.

\subsection{Main result}

In this paper, our fundamental contribution is to establish simple formulas
underlying the basic geometry of parameterized thermal states and to construct
efficient quantum algorithms for estimating these formulas. We then apply
these findings in two different ways: 1) we establish a method for quantum
Boltzmann machine learning called natural gradient descent and 2) we prove
fundamental limitations on how well one can estimate a Hamiltonian given
access to thermal-state samples, in the sense of the Cramer--Rao bound (see \eqref{eq:Cramer--Rao-multiple} and \eqref{eq:scalar-Cramer--Rao-from-matrix-one} in particular). Here, we consider general parameterized
Hamiltonians of the form
\begin{equation}
G(\theta)\coloneqq\sum_{j=1}^{J}\theta_{j}G_{j}, \label{eq:param-Ham}
\end{equation}
where $\theta \coloneqq (\theta_1, \ldots, \theta_J)$ is a parameter vector, $\theta_{j}\in\mathbb{R}$ and $G_{j}$ is a fixed Hamiltonian, for all
$j\in\{1,\ldots,J\}$. Such Hamiltonians lead to thermal states of the
following form:
\begin{equation}
\rho(\theta)\coloneqq\frac{e^{-G(\theta)}}{Z(\theta)},
\label{eq:thermal-state}
\end{equation}
where the partition function $Z(\theta)$ is defined as
\begin{equation}
Z(\theta
)\coloneqq\operatorname{Tr}[e^{-G(\theta)}].
\end{equation}
One of our primary concerns is related
to the geometry of such states, as induced by the Fisher--Bures and Kubo--Mori
metrics~\cite{Jarzyna2020}, and when doing so, one is naturally led to the
notions of the Fisher--Bures and Kubo--Mori information matrices. As some of
our main technical contributions, we establish elegant formulas for the
elements of the Fisher--Bures and Kubo--Mori information matrices. Of
practical relevance, we prove that their matrix elements can be efficiently
estimated on a quantum computer, by means of quantum algorithms that we call
the \textit{quantum Boltzmann--Fisher--Bures estimator} and the
\textit{quantum Boltzmann--Kubo--Mori estimator}, respectively. Let us note here the
striking similarity of these algorithms to the recently established quantum
Boltzmann gradient estimator~\cite{patel2024quantumboltzmann}.

\subsection{Applications}

\subsubsection{Natural gradient descent for quantum Boltzmann machine
learning}

Our first application is to develop a variant of the standard
gradient descent algorithm called natural gradient descent, which specifically
applies to quantum Boltzmann machine learning. Importantly, the gradient
update step in our approach can be performed efficiently on a quantum
computer. We should note that, as in~\cite{Coopmans2024,patel2024quantumboltzmann}, our algorithms for natural gradient descent assume the ability to sample from the thermal state in \eqref{eq:thermal-state} for every value of the parameter vector $\theta$. So the performance of these algorithms depends on that of thermal-state preparation algorithms~\cite{chen2023quantumthermalstatepreparation,chen2023efficient,bergamaschi2024,chen2024randomizedmethodsimulatinglindblad,rajakumar2024gibbssamplinggivesquantum,rouze2024efficient,bakshi2024hightemperature,ding2024polynomial}.

Natural gradient was introduced to classical machine
learning~\cite{amari1998natural} in order to account for the geometry induced
by parameterized probability distributions used in generative modeling. The
idea behind the method is that, by incorporating the natural geometry of the
problem into the search for a minimum value of the loss function, one can
navigate the optimization landscape more judiciously. In contrast, the
standard gradient descent algorithm assumes a Euclidean geometry, which is
poorly adapted to the weight space of neural networks due to their intrinsic
parameter redundancy~\cite{neyshabur2015path}.

Inspired by the usage of
natural gradient for classical machine learning, quantum scientists introduced
this method to quantum machine learning implemented with parameterized quantum
circuits (PQCs)~\cite{Stokes2020quantumnatural}. This was certainly an
interesting theoretical development, and evidence was given that the method
works well on particular problem instances. However, there is also evidence
that the barren-plateau problem, originally observed for PQCs using standard
gradient
descent~\cite{McClean_2018,marrero2021entanglement,arrasmith2022equivalence,holmes2022connecting,Fontana2024,Ragone2024}, persists even when using PQCs supplemented by natural
gradient~\cite{Haug2021}. In contrast, due to the fact that there is evidence
that quantum Boltzmann machines do not suffer from the barren-plateau
problem in some contexts~\cite{Coopmans2024}, there is the hope that quantum Boltzmann machine
learning supplemented by the natural gradient method developed here will offer
a compelling and general approach to machine learning using quantum computers.
Let us also note here that natural gradient has been developed for quantum
Hamiltonian-based models (distinct from quantum Boltzmann machines), with
numerical evidence given regarding its performance for various machine learning
tasks~\cite{sbahi2022provablyefficientvariationalgenerative}. The appendices of~\cite{sbahi2022provablyefficientvariationalgenerative} also serve as a notable and comprehensive review of the geometry of quantum states.

\subsubsection{Estimation of Hamiltonian parameters from thermal~states}

Our second application is to the problem of estimating a Hamiltonian when
given access to thermal-state samples. In this context, the following
multiparameter Cramer--Rao bound holds for an arbitrary unbiased estimator and
for a general parameterized family $(\sigma(\theta))_{\theta\in\mathbb{R}^{J}
}$ of states:
\begin{equation}
\operatorname{Cov}^{(n)}(\hat{\theta},\theta)\geq\frac{1}{n}\left[  I^{\operatorname{FB}
}(\theta)\right]  ^{-1},\label{eq:Cramer--Rao-multiple}
\end{equation}
where $n\in\mathbb{N}$ is the number of copies of the state $\sigma(\theta)$
available, $\hat{\theta}$ is an estimate of the parameter vector~$\theta$, the matrix $I^{\operatorname{FB}}(\theta)$ denotes the Fisher--Bures
information matrix (defined later in~\eqref{eq:qfim-explicit}--\eqref{eq:qfim-explicit-basis-ind}), and the covariance matrix
$\operatorname{Cov}^{(n)}(\hat{\theta},\theta)$ measures errors in estimation and is
defined in terms of its matrix elements as
\begin{multline}
\lbrack\operatorname{Cov}^{(n)}(\hat{\theta},\theta)]_{k,\ell}\coloneqq\\
\sum_{m}\operatorname{Tr}[M_{m}^{(n)}\sigma(\theta)^{\otimes n}](\hat{\theta
}_{k}(m)-\theta_{k})(\hat{\theta}_{\ell}(m)-\theta_{\ell}).
\label{eq:cov-mat-def}
\end{multline}
In the above, $(M_{m}^{(n)})_{m}$ is an arbitrary positive operator-value measure used for estimation (it
satisfies $M_{m}^{(n)}\geq0$ for all $m$ and $\sum_{m}M_{m}^{(n)}=I^{\otimes n}$). Note
that this measurement acts, in general, collectively on all $n$ copies of the
state $\sigma(\theta)^{\otimes n}$. Additionally,
\begin{equation}
\hat{\theta}(m)\coloneqq(\hat{\theta}_{1}(m),\hat{\theta}_{2}(m),\ldots
,\hat{\theta}_{J}(m))
\label{eq:parameter-estimate-function}
\end{equation}
is a function that maps the measurement outcome $m$ to an estimate
$\hat{\theta}(m)$ of the parameter vector $\theta$. See~\cite{Bengtsson2006,Liu2019,Sidhu2020,Jarzyna2020,Meyer2021fisherinformationin,sbahi2022provablyefficientvariationalgenerative,scandi2024quantumfisherinformationdynamical} for various reviews of quantum Fisher information, geometry of quantum states, and multiparameter estimation.

We stress
that~\eqref{eq:Cramer--Rao-multiple} is an operator inequality, meaning that it
is equivalent to all of the eigenvalues of the operator $\operatorname{Cov}
^{(n)}(\hat{\theta},\theta)-\frac{1}{n}\left[  I^{\operatorname{FB}
}(\theta)\right]^{-1}$ being non-negative. The inequality in
\eqref{eq:Cramer--Rao-multiple} was derived for the case $n=1$
in~\cite{Helstrom1968}, and the case of general $n\in\mathbb{N}$ follows from
the additivity of the Fisher--Bures information matrix with respect to
tensor-product states. The latter claim follows from a calculation
generalizing that in~\cite[Appendix~D]{Katariya2021}, and we provide it for convenience in
Appendix~\ref{app:add-FB-info-mat}. Let us finally note that the following operator inequalities
hold~\cite[Eq.~(60)]{Jarzyna2020}:
\begin{equation}
I^{\operatorname{KM}}(\theta
)\geq I^{\operatorname{FB}}(\theta) \geq 0,\label{eq:FB-KM-op-ineq}
\end{equation}
where $I^{\operatorname{KM}}(\theta)$ denotes the Kubo--Mori information
matrix (defined later in~\eqref{eq:KM-inf-matrix-explicit}--\eqref{eq:KM-inf-basis-ind}). This inequality implies that $\left[  I^{\operatorname{FB}}(\theta)\right]
^{-1}\geq\left[  I^{\operatorname{KM}}(\theta)\right]  ^{-1}$, which in turn
implies that $I^{\operatorname{FB}}(\theta)$ gives a better bound on $\operatorname{Cov}
^{(n)}(\hat{\theta},\theta)$ than does
$I^{\operatorname{KM}}(\theta)$. As such, we do not consider the matrix
$I^{\operatorname{KM}}(\theta)$ in the quantum multiparameter estimation application.

One can formulate a scalar performance metric
from~\eqref{eq:Cramer--Rao-multiple} by setting $W$ to be a positive
semi-definite \textquotedblleft weight\textquotedblright\ matrix and taking
the trace of both sides of~\eqref{eq:Cramer--Rao-multiple} with respect to~$W$,
which leads to
\begin{equation}
\operatorname{Tr}[W\operatorname{Cov}^{(n)}(\hat{\theta},\theta)]\geq\frac{1}
{n}\operatorname{Tr}[W\ I^{\operatorname{FB}}(\theta)^{-1}].
\label{eq:scalar-Cramer--Rao-from-matrix-one}
\end{equation}
The left-hand side of~\eqref{eq:scalar-Cramer--Rao-from-matrix-one} should
indeed be understood as a performance metric, which weights errors in
different ways, so that the inequality
in~\eqref{eq:scalar-Cramer--Rao-from-matrix-one} provides a fundamental
limitation on the performance of any possible quantum multiparameter
estimation scheme. The latter conclusion holds because the right-hand side of
\eqref{eq:scalar-Cramer--Rao-from-matrix-one} has no dependence on the
particular scheme being used for estimation and instead only depends on the
parameterized family $(\sigma(\theta))_{\theta\in\mathbb{R}^{J}}$.

The implication of our technical contribution to Hamiltonian parameter estimation is as follows: Our exact
formulas for the Fisher--Bures information matrix of the thermal-state
family in~\eqref{eq:thermal-state} provide fundamental limitations on the
performance of any possible unbiased scheme for estimating the parameter vector
$\theta$ from performing measurements on the thermal state $\rho(\theta)$. Additionally, for the special case of estimating a single parameter, we sketch an algorithm that realizes a measurement that is asymptotically optimal for the estimation task.

\subsection{Paper organization}

In the rest of the paper, we provide more background on the geometry of
quantum states induced by the Fisher--Bures and Kubo--Mori metrics, as well as
the related Fisher--Bures and Kubo--Mori information matrices (Section~\ref{sec:monotone-metrics-review}). After
that, we provide our formulas for the elements of the Fisher--Bures
and Kubo--Mori information matrices of parameterized thermal states of the
form in~\eqref{eq:thermal-state} (Section~\ref{sec:formulas-FB-KM-info-mats}), only sketching the idea behind their proofs
in the main text while providing detailed proofs in the appendix. We then
introduce our quantum algorithms, the quantum Boltzmann--Fisher--Bures and
Boltzmann--Kubo--Mori estimators, for estimating the elements of the 
Fisher--Bures and Kubo--Mori information matrices (Section~\ref{sec:q-alg-NG-QBM}). We finally go into more
detail in our two applications: a natural gradient method for quantum
Boltzmann machine learning (Section~\ref{sec:NG-QBM}), and estimating the parameters of a Hamiltonian
from copies of a thermal state (Section~\ref{sec:est-ham-param}). We conclude in Section~\ref{sec:conclusion} with a summary and some directions for future work.

\section{Review of monotone metrics}

\label{sec:monotone-metrics-review}

\subsection{Fisher--Bures metric}

When formulating a distance measure between two quantum state vectors
$|\psi\rangle$ and $|\phi\rangle$, one might initially guess that the
Euclidean distance $\left\Vert |\psi\rangle-|\phi\rangle\right\Vert _{2}
$\ would be a reasonable choice. However, in quantum mechanics, this reasoning
is misguided because two state vectors $|\psi\rangle$ and $e^{i\varphi}
|\psi\rangle$ that differ only by a global phase are physically
indistinguishable. In spite of these states being physically
indistinguishable, the distance 
\begin{equation}
\left\Vert |\psi\rangle-e^{i\varphi}
|\psi\rangle\right\Vert _{2}=\sqrt{2\left(  1-\cos(\varphi)\right)  }
\end{equation}
is
generally non-zero. As a remedy to this problem, the Bures distance
incorporates a minimization over all possible global phases:
\begin{align}
d_{B}(\psi,\phi)  &  \coloneqq\min_{\varphi\in\left[  0,2\pi\right]
}\left\Vert |\psi\rangle-e^{i\varphi}|\phi\rangle\right\Vert _{2}\\
&  =\sqrt{2\left(  1-\left\vert \langle\phi|\psi\rangle\right\vert \right)  },
\end{align}
where we have employed the abbreviations $\psi\equiv|\psi\rangle\!\langle\psi|$
and $\phi\equiv|\phi\rangle\!\langle\phi|$ and the second equality follows from
basic reasoning. The overlap quantity $\left\vert \langle\phi|\psi
\rangle\right\vert $ is known as the root fidelity of the state vectors $|\psi\rangle$ and $|\phi\rangle$.

One can follow the above line of reasoning when generalizing this distance to
general states $\rho$ and $\sigma$ (described by density operators). In doing
so, one appeals to the purification principle of quantum mechanics, which
states that every density operator $\rho_{S}$ for a system $S$ can be thought
of as the marginal of a pure state $\psi_{RS}^{\rho}$ of a bipartite system
consisting of $R$ and $S$, so that $\operatorname{Tr}_{R}[\psi_{RS}^{\rho
}]=\rho_{S}$. Let $\psi_{RS}^{\sigma}$ be a purification of $\sigma_{S}$. It
is well known that two states $\rho_{S}$ and $\sigma_{S}$ are equal if and
only if there exists a unitary $U_{R}$ acting on the reference system $R$ such
that $|\psi^{\rho}\rangle_{RS}=\left(  U_{R}\otimes I_{S}\right)
|\psi^{\sigma}\rangle_{RS}$, where $|\psi^{\rho}\rangle_{RS}$ and
$|\psi^{\sigma}\rangle_{RS}$ are the state vectors corresponding to $\psi
_{RS}^{\rho}$ and $\psi_{RS}^{\sigma}$, respectively. In this way, the unitary
$U_{R}$ generalizes the global phase discussed in the previous paragraph.
Furthermore, this line of reasoning then leads to the Bures distance of two
general states $\rho$ and $\sigma$~\cite{bures1969extension}:
\begin{align}
d_{B}(\rho,\sigma)  &  \coloneqq\min_{U_{R}}\left\Vert |\psi^{\rho}
\rangle_{RS}-\left(  U_{R}\otimes I_{S}\right)  |\psi^{\sigma}\rangle
_{RS}\right\Vert _{2}\\
&  =\sqrt{2\left(  1-\sqrt{F}(\rho,\sigma)\right)  },
\end{align}
where the fidelity $F(\rho,\sigma)$ is defined as
\begin{equation}
F(\rho,\sigma
)\coloneqq\left\Vert \sqrt{\rho}\sqrt{\sigma}\right\Vert _{1}^{2}.
\end{equation}
The
equality above is a consequence of Uhlmann's theorem~\cite{Uhlmann1976}, which
states that
\begin{equation}
F(\rho,\sigma)=\max_{U_{R}}\left\vert \langle\psi^{\sigma}
|_{RS}\left(  U_{R}\otimes I_{S}\right)  |\psi^{\rho}\rangle_{RS}\right\vert
^{2}.
\end{equation}
The above line of reasoning, as well as its deep connections to quantum
estimation theory, underpins why the Bures distance is a natural metric to
consider on the space of quantum states.

Now consider a parameterized family $\left(  \sigma(\theta)\right)
_{\theta\in\mathbb{R}^{J}}$ of quantum states, and suppose for simplicity that
each state $\sigma(\theta)$ is positive definite (indeed, this is the case for
\eqref{eq:thermal-state}, which is the main example that we will consider later
on). Then it is well known that the infinitesimal squared line element between
two members of the parameterized family is~\cite[Eq.~(77)]{Liu2019}
\begin{equation}
d_{B}^{2}(\sigma(\theta),\sigma(\theta+d\theta))=\frac{1}{4}\sum_{i,j=1}
^{J}I_{ij}^{\operatorname{FB}}(\theta)\ d\theta_{i}\ d\theta_{j},
\label{eq:Bures-Fisher-metric}
\end{equation}
where $I_{ij}^{\operatorname{FB}}(\theta)$ is a matrix element of a Riemannian
metric tensor called the Fisher--Bures information matrix, given
explicitly by the following formulas~\cite[Eqs.~(126) and (142)]{Sidhu2020}:
\begin{align}
I_{ij}^{\operatorname{FB}}(\theta)  &  \coloneqq \sum_{k,\ell}\frac{2}{\lambda_{k}+\lambda_{\ell}} \langle
k|\partial_{i}\sigma(\theta)|\ell\rangle\!\langle\ell|\partial_{j}\sigma
(\theta)|k\rangle\label{eq:qfim-explicit}\\
&  =2\int_{0}^{\infty}ds\ \operatorname{Tr}[\left(  \partial_{i}\sigma
(\theta)\right)  e^{-s\sigma(\theta)}\left(  \partial_{j}\sigma(\theta
)\right)  e^{-s\sigma(\theta)}]. \label{eq:qfim-explicit-basis-ind}
\end{align}
In the above, we use the shorthand
\begin{equation}
    \partial_{i}\equiv\partial_{\theta_{i}}
\end{equation} and a
spectral decomposition $\rho(\theta)=\sum_{k}\lambda_{k}|k\rangle\!\langle k|$,
where in the latter notation we have suppressed the dependence of the
eigenvalue $\lambda_{k}$ and the eigenvector $|k\rangle$ on the parameter
vector $\theta$. The second formula in~\eqref{eq:qfim-explicit-basis-ind} is
useful as it is basis independent.

\subsection{Kubo--Mori metric}

An alternative metric that one can consider is that induced by the quantum
relative entropy, which is defined for states $\omega$ and $\tau$ as
\cite{umegaki1962ConditionalExpectationOperator}
\begin{equation}
D(\omega\Vert\tau)\coloneqq\operatorname{Tr}[\omega(\ln\omega-\ln\tau)]
\label{eq:rel-ent-def}
\end{equation}
if the support of $\omega$ is contained in the support of $\tau$ and
$D(\omega\Vert\tau)\coloneqq+\infty$ otherwise. The quantum relative entropy
is a fundamental distinguishability measure in quantum information theory, due
to it obeying the data-processing inequality~\cite{Lindblad1975}\ and  having an operational meaning in the context of quantum hypothesis testing
\cite{hiai1991ProperFormulaRelative,nagaoka2000StrongConverseSteins}.

Due to the support condition and its related singularity, the metric induced by
the quantum relative entropy is not suitable for parameterized families of
pure states. However, for our case of interest here, the parameterized thermal
states in~\eqref{eq:thermal-state}, this is not problematic because all
thermal states in~\eqref{eq:thermal-state} are positive definite.

Following reasoning similar to that leading to~\eqref{eq:Bures-Fisher-metric},
the infinitesimal line element between two members of the parameterized family
$\left(  \sigma(\theta)\right)  _{\theta\in\mathbb{R}^{J}}$ is
\begin{equation}
D(\sigma(\theta)\Vert\sigma(\theta+d\theta))=\frac{1}{2}\sum_{i,j=1}^{J}
I_{ij}^{\operatorname{KM}}(\theta)\ d\theta_{i}\ d\theta_{j},
\label{eq:KM-metric-def}
\end{equation}
where $I_{ij}^{\operatorname{KM}}(\theta)$ is a matrix element of a
Riemannian metric tensor called the  Kubo--Mori information matrix,
given explicitly by the following formulas \cite[Eqs.~(B9), (B12), (B22), (B23)]{sbahi2022provablyefficientvariationalgenerative}:
\begin{align}
I_{ij}^{\operatorname{KM}}(\theta) &  \coloneqq\sum_{k,\ell}
c_{\operatorname{KM}}(\lambda_{k},\lambda_{\ell})\langle k|\partial_{i}
\sigma(\theta)|\ell\rangle\!\langle\ell|\partial_{j}\sigma(\theta)|k\rangle
\label{eq:KM-inf-matrix-explicit}\\
&  =\int_{0}^{\infty}ds\ \operatorname{Tr}[\left(  \partial_{i}\sigma\right)
\left(  \sigma+sI\right)  ^{-1}\left(  \partial_{j}\sigma\right)  \left(
\sigma+sI\right)  ^{-1}],\label{eq:KM-inf-basis-ind}
\end{align}
where
\begin{equation}
c_{\operatorname{KM}}(x,y)\coloneqq\left\{
\begin{array}
[c]{cc}
\frac{1}{x} & \text{if }x=y\\
\frac{\ln x-\ln y}{x-y} & \text{if }x\neq y
\end{array}
\right.  .
\end{equation}
In~\eqref{eq:KM-inf-basis-ind}, we have used the shorthand $\sigma\equiv
\sigma(\theta)$ for brevity. The formula in~\eqref{eq:KM-inf-basis-ind} can be
useful, as it is basis independent.

Interestingly, observe that the function $\frac{2}{\lambda_k + \lambda_\ell}$ is the inverse of the arithmetic mean of the eigenvalues $\lambda_k$ and $\lambda_\ell$, while the function $c_{\operatorname{KM}}(\lambda_{k},\lambda_{\ell})$ is the inverse of their logarithmic mean.

\section{Fisher--Bures and Kubo--Mori information matrices of thermal
states}

\label{sec:formulas-FB-KM-info-mats}

With this background in place, we move on to state our main technical results
(Theorems~\ref{thm:main} and \ref{thm:main-Kubo-Mori} below). Before doing so,
let us define $\Phi_{\theta}$ to be the following quantum channel:
\begin{equation}
\Phi_{\theta}(X)   \coloneqq\int_{\mathbb{R}}dt\ p(t)\ e^{-iG(\theta
)t}Xe^{iG(\theta)t}\ ,\label{eq:q-channel-phi-theta}
\end{equation}
where 
\begin{equation}
p(t)   \coloneqq\frac{2}{\pi}\ln\left\vert \coth(\pi
t/2)\right\vert \label{eq-mt:prob-dens}
\end{equation}
is a probability density function on $t\in\mathbb{R}$ known as the ``high-peak tent'' probability density~\cite{patel2024quantumboltzmann}, due to its shape when plotted. Prior
work~\cite{Hastings2007,Kim2012,Anshu2021,Coopmans2024} refers to the map
$\Phi_{\theta}$ as the quantum belief propagation superoperator, and it was
observed in~\cite{patel2024quantumboltzmann} that it is a quantum
channel (a completely positive, trace-preserving map) because $e^{-iG(\theta
)t}$ is unitary and, as stated above, $p(t)$ is a probability density function  (see also \cite[Footnote~32]{Kim2012}). 

\begin{theorem}
\label{thm:main}For the parameterized family of thermal states in
\eqref{eq:thermal-state}, the Fisher--Bures information matrix
elements are as follows:
\begin{equation}
I_{ij}^{\operatorname{FB}}(\theta)=\frac{1}{2}\left\langle\left\{
\Phi_{\theta}(G_{i}),\Phi_{\theta}(G_{j})\right\}\right\rangle_{\rho(\theta)}  -\left\langle
G_{i}\right\rangle _{\rho(\theta)}\left\langle G_{j}\right\rangle
_{\rho(\theta)},\label{eq:Fisher-matrix-entries}
\end{equation}
where $\Phi_{\theta}$ is the quantum channel defined in~\eqref{eq:q-channel-phi-theta} and 
$\langle C \rangle_{\sigma} \equiv \operatorname{Tr}[C \sigma]$.
\end{theorem}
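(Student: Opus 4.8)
The plan is to start from the basis-independent spectral formula \eqref{eq:qfim-explicit} for $I_{ij}^{\operatorname{FB}}(\theta)$ and reduce everything to an explicit evaluation of the derivative $\partial_i\rho(\theta)$. First I would differentiate $\rho(\theta)=e^{-G(\theta)}/Z(\theta)$ with respect to $\theta_i$. Since $\partial_i G(\theta)=G_i$, Duhamel's formula gives $\partial_i e^{-G(\theta)}=-\int_0^1 d\alpha\, e^{-\alpha G(\theta)}G_i e^{-(1-\alpha)G(\theta)}$, and the quotient rule together with $\partial_i Z(\theta)=-Z(\theta)\langle G_i\rangle_{\rho(\theta)}$ yields
\begin{equation}
\partial_i\rho(\theta)=-\int_0^1 d\alpha\, \rho(\theta)^{\alpha}G_i\,\rho(\theta)^{1-\alpha}+\langle G_i\rangle_{\rho(\theta)}\,\rho(\theta).
\end{equation}
Evaluating the integral in the eigenbasis $\rho(\theta)=\sum_k\lambda_k|k\rangle\!\langle k|$ shows that its $(k,\ell)$ matrix element carries the logarithmic mean $\frac{\lambda_k-\lambda_\ell}{\ln\lambda_k-\ln\lambda_\ell}$ of the eigenvalues, since $\int_0^1 d\alpha\,\lambda_k^\alpha\lambda_\ell^{1-\alpha}$ equals that mean.

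The crux is then to recognize this as an anticommutator with the belief-propagation channel. I would work in the eigenbasis of $G(\theta)$ (equivalently of $\rho(\theta)$), where $\Phi_\theta$ acts diagonally as $\langle k|\Phi_\theta(X)|\ell\rangle=\hat p(E_k-E_\ell)\langle k|X|\ell\rangle$, with $E_k$ the eigenvalues of $G(\theta)$ and $\hat p$ the Fourier transform of $p$. The main obstacle is establishing the identity $\int_0^1 d\alpha\,\rho^{\alpha}G_i\rho^{1-\alpha}=\tfrac12\{\Phi_\theta(G_i),\rho\}$; matrix-element by matrix-element this reduces to verifying that $\tfrac12(\lambda_k+\lambda_\ell)\hat p(E_k-E_\ell)$ equals the logarithmic mean, which in turn forces $\hat p(\omega)=\frac{2\tanh(\omega/2)}{\omega}$. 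I would confirm that this is precisely the Fourier transform of the high-peak tent density $p(t)=\frac{2}{\pi}\ln|\coth(\pi t/2)|$, i.e.\ the defining property of quantum belief propagation. Combined with the elementary fact that $\langle G_i\rangle_{\rho}\rho=\tfrac12\{\langle G_i\rangle_\rho I,\rho\}$, this identifies the symmetric logarithmic derivative: writing $\partial_i\rho=\tfrac12\{L_i,\rho\}$, we read off $L_i=\langle G_i\rangle_{\rho}I-\Phi_\theta(G_i)$, which is Hermitian because $\Phi_\theta$ preserves Hermiticity.

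To finish, I would pass to the symmetric-logarithmic-derivative form of the metric. Substituting $\langle k|\partial_i\rho|\ell\rangle=\frac{\lambda_k+\lambda_\ell}{2}\langle k|L_i|\ell\rangle$ into \eqref{eq:qfim-explicit} and using cyclicity of the trace collapses it to $I_{ij}^{\operatorname{FB}}(\theta)=\tfrac12\operatorname{Tr}[\rho\{L_i,L_j\}]$. Expanding $\{L_i,L_j\}$ with $L_i=\langle G_i\rangle_\rho I-\Phi_\theta(G_i)$ produces the desired term $\tfrac12\langle\{\Phi_\theta(G_i),\Phi_\theta(G_j)\}\rangle_\rho$ together with three lower-order terms. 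These simplify using the identity $\langle\Phi_\theta(G_i)\rangle_\rho=\langle G_i\rangle_\rho$, which holds because $\rho$ commutes with $e^{\pm iG(\theta)t}$ and $\int_{\mathbb R}p(t)\,dt=1$, so that cyclicity lets the time-evolution unitaries cancel. The three remaining terms then combine into $-\langle G_i\rangle_\rho\langle G_j\rangle_\rho$, giving \eqref{eq:Fisher-matrix-entries}.
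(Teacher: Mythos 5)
Your proposal is correct, and it reaches \eqref{eq:Fisher-matrix-entries} by a route that is organized differently from the paper's. The paper takes the derivative formula \eqref{eq:derivative-thermal-state} as an imported fact (citing prior work for its proof), substitutes the resulting matrix elements \eqref{eq:overlap-QFI-term-thermal} directly into the eigenbasis sum \eqref{eq:qfim-explicit}, and expands the product into four sums that are simplified term by term (Appendix~\ref{app:FB-info-mat}). You instead (a) derive the derivative formula from scratch via Duhamel's formula, the evaluation $\int_0^1 d\alpha\,\lambda_k^\alpha\lambda_\ell^{1-\alpha}=\frac{\lambda_k-\lambda_\ell}{\ln\lambda_k-\ln\lambda_\ell}$, and the Fourier-transform characterization $\hat p(\omega)=\frac{2\tanh(\omega/2)}{\omega}$ of the high-peak tent density (your target identity $\frac{1}{2}(\lambda_k+\lambda_\ell)\hat p(E_k-E_\ell)=\frac{\lambda_k-\lambda_\ell}{\ln\lambda_k-\ln\lambda_\ell}$ checks out, since $\frac{e^{-E_k}-e^{-E_\ell}}{e^{-E_k}+e^{-E_\ell}}=-\tanh((E_k-E_\ell)/2)$, and the Fourier pair is the standard defining property of quantum belief propagation); and (b) route the final computation through the SLD operator $L_i=\langle G_i\rangle_\rho I-\Phi_\theta(G_i)$ and the standard identity $I^{\operatorname{FB}}_{ij}=\frac{1}{2}\operatorname{Tr}[\rho\{L_i,L_j\}]$, so that the last step is a two-line expansion of the anticommutator using $\langle\Phi_\theta(G_i)\rangle_\rho=\langle G_i\rangle_\rho$. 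What your version buys: it is self-contained where the paper defers to references, and it establishes the paper's Theorem~\ref{thm:main-sld} (the SLD formula, proved separately in Appendix~\ref{app:proof-SLD}) as a free intermediate byproduct, with cleaner final algebra. What the paper's version buys: it avoids invoking the SLD quadratic form of the Fisher--Bures matrix as an external fact and works entirely from the definition \eqref{eq:qfim-explicit}. The one step you flag as ``to be confirmed,'' the Fourier transform of $p(t)=\frac{2}{\pi}\ln\lvert\coth(\pi t/2)\rvert$, is indeed correct and well documented, so it is not a gap.
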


Let us note the following alternative forms of $I_{ij}^{\operatorname{FB}
}(\theta)$ for the family in~\eqref{eq:thermal-state}:
\begin{align}
& I_{ij}^{\operatorname{FB}}(\theta) \notag \\
&  =\operatorname{Re}\left[
\operatorname{Tr}[\Phi_{\theta}(G_{i})\rho(\theta)\Phi_{\theta}(G_{j})\right]
]-\left\langle G_{i}\right\rangle _{\rho(\theta)}\left\langle G_{j}
\right\rangle _{\rho(\theta)}\label{eq:q-alg-FI-exp}\\
&  =\frac{1}{2}\operatorname{Tr}\left[  \left\{  \Phi_{\theta}(G_{i})-\mu
_{i},\Phi_{\theta}(G_{j})-\mu_{j}\right\}  \rho(\theta)\right]
,\label{eq:q-cov-matrix}
\end{align}
where
\begin{equation}
\mu_{i}\coloneqq\left\langle \Phi_{\theta}(G_{i})\right\rangle
_{\rho(\theta)}=\left\langle G_{i}\right\rangle _{\rho(\theta)}.
\label{eq:mean-mu-def}
\end{equation}
 The first
expression in~\eqref{eq:q-alg-FI-exp} is helpful for arriving at the
conclusion that each $I_{ij}^{\operatorname{FB}}(\theta)$ can be efficiently
estimated on a quantum computer (under the assumption that each $G_{i}$ is a
local operator, acting on a constant number of qubits). The second expression
in~\eqref{eq:q-cov-matrix} clarifies that $I^{\operatorname{FB}}(\theta)$
itself is a quantum covariance matrix, implying that $I^{\operatorname{FB}}(\theta)\geq 0$ and as already indicated more generally in~\eqref{eq:FB-KM-op-ineq}.

The presence of the quantum channel $\Phi_{\theta}$ in
\eqref{eq:Fisher-matrix-entries}--\eqref{eq:q-cov-matrix} is a hallmark of
non-commutativity of the set $\{G_{j}\}_{j=1}^{J}$ in~\eqref{eq:param-Ham}. If
the set $\{G_{j}\}_{j=1}^{J}$ is commuting, as is the case for thermal states
in classical mechanics, then there is no need for the channel $\Phi_{\theta}$
to be present in~\eqref{eq:Fisher-matrix-entries}--\eqref{eq:q-cov-matrix},
and the multiparameter quantum Cramer--Rao bound in~\eqref{eq:Cramer--Rao-multiple} is
saturated, with the optimal single-copy measurement scheme being to measure
each $G_{i}$ on each copy of $\rho(\theta)$.

The derivation of~\eqref{eq:Fisher-matrix-entries} is rather straightforward
when using the following formula for the partial derivative of~$\rho(\theta
)$:
\begin{equation}
\partial_{j}\rho(\theta)=-\frac{1}{2}\left\{  \Phi_{\theta}(G_{j}),\rho
(\theta)\right\}  +\rho(\theta)\left\langle G_{j}\right\rangle_{\rho(\theta)} .
\label{eq:derivative-thermal-state}
\end{equation}
This formula is a direct consequence of the developments in~\cite[Eq.~(9)]
{Hastings2007},~\cite[Proposition~20]{Anshu2020arXiv}, and~\cite[Lemma~5]
{Coopmans2024} and its proof was reviewed recently in
\cite{patel2024quantumboltzmann}. Plugging~\eqref{eq:derivative-thermal-state}
into the expression $\langle k|\partial_{i}\rho(\theta)|\ell\rangle$ in
\eqref{eq:qfim-explicit} leads to
\begin{equation}
\langle\ell|\partial_{j}\rho(\theta)|k\rangle=-\frac{1}{2}\langle\ell
|\Phi_{\theta}(G_{j})|k\rangle\left(  \lambda_{k}+\lambda_{\ell}\right)
+\delta_{k\ell}\lambda_{\ell}\left\langle G_{j}\right\rangle _{\rho(\theta)}.
\label{eq:overlap-QFI-term-thermal}
\end{equation}
Then plugging~\eqref{eq:overlap-QFI-term-thermal} into
\eqref{eq:qfim-explicit} and performing some basic algebraic manipulations
leads to~\eqref{eq:Fisher-matrix-entries}. See Appendix~\ref{app:FB-info-mat} for details.

We now state our formula for the Kubo--Mori information matrix elements of parameterized thermal states:

\begin{theorem}
\label{thm:main-Kubo-Mori}For the parameterized family of thermal states in
\eqref{eq:thermal-state}, the Kubo--Mori information matrix elements
are as follows:
\begin{equation}
I_{ij}^{\operatorname{KM}}(\theta)=\frac{1}{2}\left \langle\left\{
G_{i},\Phi_{\theta}(G_{j})\right\}\right \rangle_{\rho(\theta)}  -\left\langle G_{i}
\right\rangle _{\rho(\theta)}\left\langle G_{j}\right\rangle _{\rho(\theta
)},\label{eq:KM-info-matrix-entries}
\end{equation}
where $\Phi_{\theta}$ is the quantum channel defined in~\eqref{eq:q-channel-phi-theta} and $\langle C \rangle_{\sigma} \equiv \operatorname{Tr}[C \sigma]$.
\end{theorem}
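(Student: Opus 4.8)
The plan is to follow the same template as the derivation of~\eqref{eq:Fisher-matrix-entries}, starting from the spectral formula~\eqref{eq:KM-inf-matrix-explicit}, but with one crucial modification dictated by the weight $c_{\operatorname{KM}}$. As emphasized right after~\eqref{eq:KM-inf-basis-ind}, the coefficient $c_{\operatorname{KM}}(\lambda_{k},\lambda_{\ell})$ is the inverse of the logarithmic mean $L(\lambda_k,\lambda_\ell)\coloneqq (\lambda_k-\lambda_\ell)/(\ln\lambda_k-\ln\lambda_\ell)$, whereas the Fisher--Bures weight $2/(\lambda_k+\lambda_\ell)$ is the inverse of the arithmetic mean. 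This tells me that, to achieve a clean cancellation against $c_{\operatorname{KM}}$, I should not reuse~\eqref{eq:overlap-QFI-term-thermal} (whose matrix element carries an arithmetic-mean factor $\tfrac12(\lambda_k+\lambda_\ell)$), but rather a companion ``bare-$G$'' form in which $\partial_j\rho(\theta)$ carries a logarithmic-mean factor multiplying $G_j$ itself.

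First I would establish this bare-$G$ form. Differentiating $\rho(\theta)=e^{-G(\theta)}/Z(\theta)$ directly with Duhamel's formula and using $\partial_j Z(\theta)=-Z(\theta)\langle G_j\rangle_{\rho(\theta)}$, I expect to obtain
\begin{equation}
\langle\ell|\partial_{j}\rho(\theta)|k\rangle = -L(\lambda_{k},\lambda_{\ell})\,\langle\ell|G_{j}|k\rangle + \delta_{k\ell}\,\lambda_{\ell}\,\langle G_{j}\rangle_{\rho(\theta)},
\end{equation}
where the logarithmic mean arises because $\int_0^1 e^{-sE_\ell-(1-s)E_k}\,ds$, after dividing by $Z(\theta)$ and writing $\lambda_k=e^{-E_k}/Z(\theta)$, yields exactly $L(\lambda_k,\lambda_\ell)$. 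Comparing this expression term-by-term with~\eqref{eq:overlap-QFI-term-thermal} also produces the matrix-element identity $L(\lambda_k,\lambda_\ell)\langle\ell|G_j|k\rangle=\tfrac12(\lambda_k+\lambda_\ell)\langle\ell|\Phi_\theta(G_j)|k\rangle$; that is, $\Phi_\theta$ rescales each matrix element by the ratio of the logarithmic to the arithmetic mean. This identity is what lets me reinstate the channel at the end.

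Next I would substitute the bare-$G$ form into both factors of~\eqref{eq:KM-inf-matrix-explicit}. The leading (non-$\delta$) contribution then carries the weight $c_{\operatorname{KM}}(\lambda_k,\lambda_\ell)\,L(\lambda_k,\lambda_\ell)^2=L(\lambda_k,\lambda_\ell)$, since $c_{\operatorname{KM}}=1/L$; this is the key cancellation that collapses the awkward spectral weight. The three remaining terms containing Kronecker deltas are handled with $L(\lambda_k,\lambda_k)=\lambda_k$, $\sum_k\lambda_k=1$, and $\operatorname{Tr}[\rho(\theta)G_i]=\langle G_i\rangle_{\rho(\theta)}$: the two cross terms each contribute $-\langle G_i\rangle_{\rho(\theta)}\langle G_j\rangle_{\rho(\theta)}$ and the diagonal-diagonal term contributes $+\langle G_i\rangle_{\rho(\theta)}\langle G_j\rangle_{\rho(\theta)}$, for a net of $-\langle G_i\rangle_{\rho(\theta)}\langle G_j\rangle_{\rho(\theta)}$, exactly the subtracted term in~\eqref{eq:KM-info-matrix-entries}.

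Finally, for the surviving leading term $\sum_{k,\ell}L(\lambda_k,\lambda_\ell)\langle k|G_i|\ell\rangle\langle\ell|G_j|k\rangle$, I would apply the matrix-element identity to exactly one factor, replacing $L(\lambda_k,\lambda_\ell)\langle\ell|G_j|k\rangle$ by $\tfrac12(\lambda_k+\lambda_\ell)\langle\ell|\Phi_\theta(G_j)|k\rangle$. Splitting $\lambda_k+\lambda_\ell$ and resumming over the eigenbasis (via $\rho(\theta)=\sum_k\lambda_k|k\rangle\!\langle k|$ and completeness) turns this into $\tfrac12(\operatorname{Tr}[\rho(\theta)G_i\Phi_\theta(G_j)]+\operatorname{Tr}[\rho(\theta)\Phi_\theta(G_j)G_i])=\tfrac12\langle\{G_i,\Phi_\theta(G_j)\}\rangle_{\rho(\theta)}$, completing the proof. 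The main obstacle, and the only genuinely nontrivial point, is recognizing that one must pass to the logarithmic-mean (bare-$G$) form to exploit $c_{\operatorname{KM}}=1/L$, and then reinstate $\Phi_\theta$ on a single side; reusing~\eqref{eq:overlap-QFI-term-thermal} on both sides as in the Fisher--Bures case would instead leave the non-collapsing weight $(\lambda_k+\lambda_\ell)^2/(4L(\lambda_k,\lambda_\ell))$. As a consistency check, I would observe that although~\eqref{eq:KM-info-matrix-entries} looks asymmetric in $i,j$, the collapsed form $\sum_{k,\ell}L(\lambda_k,\lambda_\ell)\langle k|G_i|\ell\rangle\langle\ell|G_j|k\rangle$ is manifestly symmetric, as a Riemannian metric requires.
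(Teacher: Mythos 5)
Your proof is correct, and it takes a genuinely different route from the paper's, though the two are closely related. The paper substitutes the belief-propagation form \eqref{eq:overlap-QFI-term-thermal} into \emph{both} factors of \eqref{eq:KM-inf-matrix-explicit}, accepts the resulting non-collapsing weight $\frac{(\ln\lambda_k-\ln\lambda_\ell)(\lambda_k+\lambda_\ell)^2}{4(\lambda_k-\lambda_\ell)}$, and then removes one copy of $\Phi_\theta$ by expanding the integral in \eqref{eq:q-channel-phi-theta} in the eigenbasis of $G(\theta)$ and invoking the Fourier transform of $p(t)$, which produces the factor $\tanh(\tfrac{\mu_\ell-\mu_k}{2})/\tfrac{\mu_\ell-\mu_k}{2}$ that cancels the awkward weight. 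You instead start from the Duhamel (bare-$G$) form of $\langle\ell|\partial_j\rho(\theta)|k\rangle$, whose logarithmic-mean coefficient cancels $c_{\operatorname{KM}}=1/L$ immediately, and only afterwards reinstate a single $\Phi_\theta$ via the matrix-element identity $L(\lambda_k,\lambda_\ell)\langle\ell|G_j|k\rangle=\tfrac12(\lambda_k+\lambda_\ell)\langle\ell|\Phi_\theta(G_j)|k\rangle$, which you obtain for free by equating your bare-$G$ form with \eqref{eq:overlap-QFI-term-thermal} (this identity encodes exactly the same $\tanh$ Fourier-transform fact the paper computes explicitly, so neither route avoids it --- you just import it from the already-established \eqref{eq:derivative-thermal-state} rather than rederiving it). Your bookkeeping of the Kronecker-delta terms (net $-\langle G_i\rangle_{\rho(\theta)}\langle G_j\rangle_{\rho(\theta)}$, using $L(\lambda_k,\lambda_k)=\lambda_k$) and the final resummation of $\tfrac12(\lambda_k+\lambda_\ell)\langle k|G_i|\ell\rangle\langle\ell|\Phi_\theta(G_j)|k\rangle$ into $\tfrac12\langle\{G_i,\Phi_\theta(G_j)\}\rangle_{\rho(\theta)}$ both check out. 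What your approach buys is a shorter and more conceptual calculation in which the key cancellation $c_{\operatorname{KM}}\cdot L^2=L$ happens up front, plus the manifestly symmetric intermediate form $\sum_{k,\ell}L(\lambda_k,\lambda_\ell)\langle k|G_i|\ell\rangle\langle\ell|G_j|k\rangle$; what the paper's route buys is that it reuses the single derivative formula \eqref{eq:overlap-QFI-term-thermal} already deployed for Theorem~\ref{thm:main}, at the cost of a longer explicit $\tanh$ computation. To make your argument fully self-contained you would still need to write out the short Duhamel computation $\int_0^1 e^{-sE_\ell-(1-s)E_k}\,ds=\frac{e^{-E_k}-e^{-E_\ell}}{E_\ell-E_k}=Z\,L(\lambda_k,\lambda_\ell)$, but that is routine.
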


Let us note the following alternative forms of $I_{ij}^{\operatorname{KM}
}(\theta)$ for the family in~\eqref{eq:thermal-state}:
\begin{align}
& I_{ij}^{\operatorname{KM}}(\theta) \notag \\
&  =\operatorname{Re}\left[
\operatorname{Tr}[G_{i}\rho(\theta)\Phi_{\theta}(G_{j})\right]
]-\left\langle G_{i}\right\rangle _{\rho(\theta)}\left\langle G_{j}
\right\rangle _{\rho(\theta)}\label{eq:q-alg-KMI-exp}\\
&  =\frac{1}{2}\operatorname{Tr}\left[  \left\{  G_{i}-\mu
_{i},\Phi_{\theta}(G_{j})-\mu_{j}\right\}  \rho(\theta)\right]
,\label{eq:q-cov-matrix-KM},
\end{align}
where we again made use of \eqref{eq:mean-mu-def}. 

Observe that, interestingly, the only difference between~\eqref{eq:Fisher-matrix-entries} and~\eqref{eq:KM-info-matrix-entries} is the second appearance  of the quantum channel $\Phi_\theta$ in the first term of~\eqref{eq:Fisher-matrix-entries}. 

The proof of Theorem~\ref{thm:main-Kubo-Mori} follows by plugging \eqref{eq:overlap-QFI-term-thermal} into~\eqref{eq:KM-inf-matrix-explicit} and
performing a number of algebraic manipulations. See Appendix~\ref{app:KM-info-mat-proof} for details. 

Based on the operator inequality in~\eqref{eq:FB-KM-op-ineq}, it follows that
\begin{equation}
v^{T}I^{\operatorname{KM}}(\theta)v\geq v^{T}I^{\operatorname{FB}}
(\theta)v\label{eq:ineq-FB-KM-PSD}
\end{equation}
for every vector $v=(v_{1},\ldots,v_{J})$. Defining $W\coloneqq \sum_{j=1}^{J}
v_{j}G_{j}$, it follows from~\eqref{eq:ineq-FB-KM-PSD}, the expressions in~\eqref{eq:Fisher-matrix-entries} and~\eqref{eq:KM-info-matrix-entries}, and the equality $\left\langle W\right\rangle _{\rho(\theta)} = \left\langle \Phi_{\theta}(W)\right\rangle _{\rho(\theta)}$  that
the following inequality holds:
\begin{multline}
\frac{1}{2}\operatorname{Tr}[\left\{  W,\Phi_{\theta}(W)\right\}  \rho
(\theta)]-\left[  \left\langle W\right\rangle _{\rho(\theta)}\right]  ^{2}\\
\geq\operatorname{Tr}[\left[  \Phi_{\theta}(W)\right]  ^{2}\rho(\theta
)]-\left[  \left\langle \Phi_{\theta}(W)\right\rangle _{\rho(\theta)}\right]
^{2}.\label{eq:anshu-et-al-ineq}
\end{multline}
We note that~\eqref{eq:anshu-et-al-ineq} was proved as~\cite[Lemma~30]{Anshu2020arXiv}. As such, here we have provided an alternative proof of~\eqref{eq:anshu-et-al-ineq},
which is based on the expressions stated in Theorems~\ref{thm:main} and~\ref{thm:main-Kubo-Mori} and the fundamental inequality in~\eqref{eq:FB-KM-op-ineq}, well known in quantum multiparameter estimation.

\section{Quantum algorithms for estimating the Fisher--Bures and
Kubo--Mori information matrices of thermal states}

\label{sec:q-alg-NG-QBM}

In this section, we develop our quantum algorithms for estimating the Fisher--Bures and Kubo--Mori information matrix elements given in~\eqref{eq:Fisher-matrix-entries} and~\eqref{eq:KM-info-matrix-entries}, respectively. We call them the quantum
Boltzmann--Fisher--Bures estimator and the 
quantum Boltzmann--Kubo--Mori estimator, respectively. Similar to the quantum Boltzmann gradient estimator of~\cite{patel2024quantumboltzmann}, they employ a combination of classical random sampling, Hamiltonian
simulation~\cite{lloyd1996universal,childs2018toward}, and the Hadamard
test~\cite{Cleve1998} (the last reviewed in Appendix~\ref{app:quant_primitive}).

\subsection{Quantum
Boltzmann--Fisher--Bures estimator}

\label{sec:QBFB-estimator}

We first develop the quantum
Boltzmann--Fisher--Bures estimator.
Suppose that each $G_{j}$ in~\eqref{eq:param-Ham} is
not only Hermitian but also unitary, as it is for the common case in which
each $G_{j}$ is a tensor product of Pauli operators. Then the second term
$\left\langle G_{i}\right\rangle _{\rho(\theta)}\left\langle G_{j}
\right\rangle _{\rho(\theta)}$ in~\eqref{eq:Fisher-matrix-entries}\ can be
estimated by means of a quantum algorithm. Since it can be written as
\begin{equation}
\left\langle G_{i}\right\rangle _{\rho(\theta)}\left\langle G_{j}\right\rangle
_{\rho(\theta)}=\operatorname{Tr}[\left(  G_{i}\otimes G_{j}\right)  \left(
\rho(\theta)\otimes\rho(\theta)\right)  ], \label{eq:simple-term-to-estimate}
\end{equation}
a procedure for estimating it is to generate the state $\rho(\theta
)\otimes\rho(\theta)$ and then measure the observable $G_{i}\otimes G_{j}$ on
these two copies. Through repetition, the estimate of $\left\langle
G_{i}\right\rangle _{\rho(\theta)}\left\langle G_{j}\right\rangle
_{\rho(\theta)}$ can be made as precise as desired. This procedure is
described in detail as~\cite[Algorithm~2]{patel2024quantumboltzmann}, the
result of which is that $O(\varepsilon^{-2}\ln\delta^{-1})$ samples of
$\rho(\theta)$ are required to have an accuracy of $\varepsilon>0$ with a
failure probability of $\delta\in\left(  0,1\right)  $.

We also delineate a quantum algorithm for estimating the first term
in~\eqref{eq:Fisher-matrix-entries}. It follows by direct substitution of~\eqref{eq:q-channel-phi-theta}, along with further algebraic manipulations,
that
\begin{multline}
\frac{1}{2}\left\langle\left\{  \Phi_{\theta}(G_{i}),\Phi_{\theta}
(G_{j})\right\}\right\rangle_{\rho(\theta)}   = \\
\int_{\mathbb{R}}\int_{\mathbb{R}}dt_{1}\ dt_{2}\ p(t_{1}
)\, p(t_{2})\ \operatorname{Re}[\operatorname{Tr}[U_{i,j}(\theta,t_{1}-t_{2}
)\rho(\theta)]],
\label{eq:QBFB-estimator-proof}
\end{multline}
where
\begin{equation}
U_{i,j}(\theta,t)\coloneqq G_{j}e^{-iG(\theta)t}G_{i}e^{iG(\theta)t}.
\label{eq:unitary-estimator}
\end{equation}
We can then estimate the first term in~\eqref{eq:Fisher-matrix-entries} by a
combination of classical random sampling, Hamiltonian
simulation~\cite{lloyd1996universal,childs2018toward}, and the Hadamard
test~\cite{Cleve1998}. This is the key insight behind the quantum
Boltzmann--Fisher--Bures estimator. Indeed, the basic idea is to sample
$t_{1}$ and $t_{2}$ independently from the probability density $p(t)$. Based
on these choices, we then execute the quantum circuit in
Figure~\ref{fig:QBFB-estimator}, which outputs a binary random
variable~$Y$, that has a realization $y=0$ occurring with probability
\begin{equation}
\frac{1}{2}\left(  1+\operatorname{Re}[\operatorname{Tr}[U_{i,j}(\theta
,t_{1}-t_{2})\rho(\theta)]]\right)
\end{equation}
 and $y=1$ occurring with probability
\begin{equation}
\frac{1}{2}\left(  1-\operatorname{Re}[\operatorname{Tr}[U_{i,j}(\theta
,t_{1}-t_{2})\rho(\theta)]]\right)  .
\end{equation}
Thus, the expectation of a random
variable $Z=\left(  -1\right)  ^{Y}$, conditioned on $t_{1}$ and $t_{2}$, is
equal to
\begin{equation}
\operatorname{Re}[\operatorname{Tr}[U_{i,j}(\theta,t_{1}-t_{2}
)\rho(\theta)]],
\end{equation}
 and including the further averaging over $t_{1}$ and $t_{2}$,
the expectation is equal to $\operatorname{Tr}[\left\{  \Phi_{\theta}
(G_{i}),\Phi_{\theta}(G_{j})\right\}  \rho(\theta)]$. As such, we can sample
$t_{1}$, $t_{2}$, and $Z$ in this way, and averaging the outcomes gives an
unbiased estimate of the first term in~\eqref{eq:Fisher-matrix-entries}. This
procedure is described in detail in Appendix~\ref{app:algo-first-term-FB} as Algorithm~\ref{algo:first-term-FB}, the result
of which is that $O(\varepsilon^{-2}\ln\delta^{-1})$ samples of $\rho(\theta)$
are required to have an accuracy of $\varepsilon>0$ with a failure probability
of $\delta\in\left(  0,1\right)  $.

We finally note that this construction can straightforwardly be generalized
beyond the case of each $G_{j}$ being a Pauli string, if they instead are Hermitian operators 
block encoded into unitary circuits~\cite{Low2019hamiltonian,Gilyen2019}.

\begin{figure}[ptb]
\begin{center}
\includegraphics[
width=3.3598in
]
{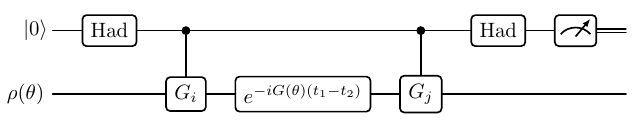}
\end{center}
\caption{Quantum circuit that plays a role in realizing an unbiased estimate
of $\frac{1}{2}\left\langle \left\{  \Phi_{\theta}(G_{i}),\Phi_{\theta
}(G_{j})\right\} \right\rangle_{\rho(\theta)} $. The quantum Boltzmann--Fisher--Bures
estimator combines this estimate with an unbiased estimate of $\left\langle
G_{i}\right\rangle _{\rho(\theta)}\left\langle G_{j}\right\rangle
_{\rho(\theta)}$ in order to realize an unbiased estimate of the Fisher--Bures information
matrix element $I_{i,j}^{\operatorname{FB}}(\theta)$ in~\eqref{eq:Fisher-matrix-entries}.}
\label{fig:QBFB-estimator}
\end{figure}

\subsection{Quantum
Boltzmann--Kubo--Mori estimator}

Let us now develop the quantum Boltzmann--Kubo--Mori estimator. The second term of~\eqref{eq:KM-info-matrix-entries} is the same as the second term of~\eqref{eq:Fisher-matrix-entries}. As such, we can use the same procedure delineated in the paragraph surrounding~\eqref{eq:simple-term-to-estimate} in order to estimate it.

The first term of~\eqref{eq:KM-info-matrix-entries} is slightly different from the first term of~\eqref{eq:Fisher-matrix-entries}, and we can use a similar procedure as outlined in the paragraph surrounding~\eqref{eq:QBFB-estimator-proof} in order to estimate it. We provide details for completeness. It follows by direct substitution of~\eqref{eq:q-channel-phi-theta}, along with further algebraic manipulations,
that
\begin{multline}
\frac{1}{2}\operatorname{Tr}[\left\{  G_{i},\Phi_{\theta}
(G_{j})\right\}  \rho(\theta)]=\\
\int_{\mathbb{R}}dt\ p(t
)\ \operatorname{Re}[\operatorname{Tr}[U_{i,j}(\theta,t
)\rho(\theta)]],
\label{eq:QBKM-estimator-proof}
\end{multline}
where $U_{i,j}(\theta,t)$ is defined in~\eqref{eq:unitary-estimator}.
We can then estimate the first term in~\eqref{eq:KM-info-matrix-entries}, as before, by a
combination of classical random sampling, Hamiltonian
simulation, and the Hadamard
test. Indeed, the basic idea is to sample
$t$ from the probability density $p(t)$. Based
on this choice, we then execute the quantum circuit in
Figure~\ref{fig:QBKM-estimator}, which outputs a Rademacher random
variable~$Y$, that has a realization $y=+1$ occurring with probability
\begin{equation}
\frac{1}{2}\left(  1+\operatorname{Re}[\operatorname{Tr}[U_{i,j}(\theta
,t)\rho(\theta)]]\right)
\end{equation}
 and $y=-1$ occurring with probability
\begin{equation}
\frac{1}{2}\left(  1-\operatorname{Re}[\operatorname{Tr}[U_{i,j}(\theta
,t)\rho(\theta)]]\right)  .
\end{equation}
Thus, the expectation of a random
variable $Z=\left(  -1\right)  ^{Y}$, conditioned on $t$, is
equal to
\begin{equation}
\operatorname{Re}[\operatorname{Tr}[U_{i,j}(\theta,t
)\rho(\theta)]]
\end{equation}
 and including the further averaging over $t$,
the expectation is equal to $\operatorname{Tr}[\left\{  
G_{i},\Phi_{\theta}(G_{j})\right\}  \rho(\theta)]$. As such, we can sample
$t$ and $Z$ in this way, and averaging the outcomes gives an
unbiased estimate of the first term in~\eqref{eq:KM-info-matrix-entries}. This
procedure is described in detail in Appendix~\ref{app:algo-first-term-KM} as Algorithm~\ref{algo:first-term-KM}, the result
of which is that $O(\varepsilon^{-2}\ln\delta^{-1})$ samples of $\rho(\theta)$
are required to have an accuracy of $\varepsilon>0$ with a failure probability
of $\delta\in\left(  0,1\right)  $.

\begin{figure}[ptb]
\begin{center}
\includegraphics[
width=3.3598in
]
{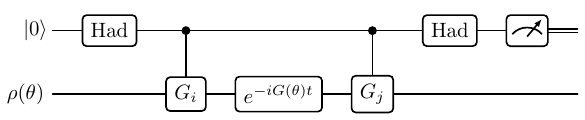}
\end{center}
\caption{Quantum circuit that plays a role in realizing an unbiased estimate
of $\frac{1}{2}\operatorname{Tr}[\left\{  G_{i},\Phi_{\theta
}(G_{j})\right\}  \rho(\theta)]$. The quantum Boltzmann--Kubo--Mori
estimator combines this estimate with an unbiased estimate of $\left\langle
G_{i}\right\rangle _{\rho(\theta)}\left\langle G_{j}\right\rangle
_{\rho(\theta)}$ in order to realize an unbiased estimate of the Kubo--Mori information
matrix element $I_{i,j}^{\operatorname{KM}}(\theta)$ in~\eqref{eq:KM-info-matrix-entries}.}
\label{fig:QBKM-estimator}
\end{figure}

\section{Natural gradient for quantum Boltzmann machine learning}

\label{sec:NG-QBM}

Our first application of Theorem~\ref{thm:main} is to
propose a natural gradient descent algorithm for optimizing quantum Boltzmann
machines. Before doing so, let us briefly motivate natural gradient descent,
following the motivation given in~\cite{Stokes2020quantumnatural}. Let
$\mathcal{L}(\theta)$ be a loss function, which is a function of the parameter
vector $\theta\in\mathbb{R}^{J}$. The goal of an optimization algorithm is to
minimize the loss $\mathcal{L}(\theta)$, and the standard gradient descent
algorithm does so by means of the following update rule:
\begin{equation}
\theta_{m+1}\coloneqq\theta_{m}-\eta\nabla_{\theta}\mathcal{L}(\theta_{m}),
\end{equation}
where $\eta>0$ is the learning rate or step size and $\nabla_{\theta
}\mathcal{L}(\theta)$ is the gradient. Given that
\begin{multline}
\theta_{m}-\eta\nabla_{\theta}\mathcal{L}(\theta_{m}
)=\label{eq:GD-euclidean-opt}\\
\arg\min_{\theta\in\mathbb{R}^{J}}\left[  \left\langle \theta-\theta
_{m},\nabla_{\theta}\mathcal{L}(\theta_{m})\right\rangle +\frac{1}{2\eta
}\left\Vert \theta-\theta_{m}\right\Vert _{2}^{2}\right]  ,
\end{multline}
the standard gradient descent algorithm moves in the direction of steepest
descent, with respect to $\ell_{2}$ (Euclidean)\ geometry.
As argued
previously, this geometry is typically not the correct geometry for the underlying
problem and can lead to difficulties when optimizing.

The intuition behind~\eqref{eq:GD-euclidean-opt} is as follows. We need to choose a point that minimizes the function $\mathcal{L}$. So, we begin with a linear approximation of this function at the point $\theta_m$:
\begin{equation}
    \mathcal{L}(\theta) \approx \mathcal{L}(\theta_m) + \left\langle \theta-\theta
_{m},\nabla_{\theta}\mathcal{L}(\theta_{m})\right\rangle .
\label{eq:linear-approx-loss}
\end{equation}
However, directly minimizing the above function leads to a value of~$-\infty$ because the above problem is an unconstrained optimization problem. Therefore, we need to add a penalty term to the right-hand side of \eqref{eq:linear-approx-loss}, which forces the optimization to look in the vicinity of the point~$\theta_m$:
\begin{equation}
    \mathcal{L}(\theta_m) + \left\langle \theta-\theta
_{m},\nabla_{\theta}\mathcal{L}(\theta_{m})\right\rangle + \frac{1}{2\eta} \left\Vert \theta - \theta_m \right\Vert_2^{2}.
    \label{eq:constrained-opt-GD}
\end{equation}
In the above, $\eta$ is the step size which one can tune depending on how much one wants to penalize the objective function. Thus, the resulting constrained optimization problem is to minimize the right-hand side of \eqref{eq:constrained-opt-GD} over all $\theta$, and an optimal choice of $\theta$ is given by~\eqref{eq:GD-euclidean-opt}.

\subsection{Natural gradient with respect to the Fisher--Bures metric}

\label{sec:NG-FB}

For thermal states of
the form in~\eqref{eq:thermal-state}, the geometry induced by the
Fisher--Bures metric is suitable. In order to accommodate it, we modify the
optimization problem in~\eqref{eq:GD-euclidean-opt} to incorporate the
Fisher--Bures metric in~\eqref{eq:Bures-Fisher-metric}:
\begin{multline}
\theta_{m+1}=\label{eq:mod-GD-nat-opt}\\
\arg\min_{\theta\in\mathbb{R}^{J}}\left[  \left\langle \theta-\theta
_{m},\nabla_{\theta}\mathcal{L}(\theta_{m})\right\rangle +\frac{1}{2\eta
}\left\Vert \theta-\theta_{m}\right\Vert _{g^{\operatorname{FB}}(\theta_{m})}^{2}\right]  ,
\end{multline}
where the squared distance measure $\left\Vert \theta-\theta_{m}\right\Vert
_{g^{\operatorname{FB}}(\theta_{m})}^{2}$ is defined as
\begin{align}
\left\Vert \theta-\theta_{m}\right\Vert _{g^{\operatorname{FB}}(\theta_{m})}^{2}  &
\coloneqq\left\langle \theta-\theta_{m},g^{\operatorname{FB}}(\theta_{m})\left(  \theta-\theta
_{m}\right)  \right\rangle ,\\
\left[  g^{\operatorname{FB}}(\theta)\right]  _{ij}  &  \coloneqq\frac{1}{4}I_{ij}
^{\operatorname{FB}}(\theta).
\end{align}
The first-order optimality condition corresponding to~\eqref{eq:mod-GD-nat-opt} is
\begin{equation}
\frac{1}{4}I^{\operatorname{FB}}(\theta_{m})\left(  \theta_{m+1}
-\theta_{m}\right)  =-\eta\nabla_{\theta}\mathcal{L}(\theta_{m}),
\end{equation}
and solving the above equation for $\theta_{m+1}$ gives the following natural
gradient descent update rule:
\begin{equation}
\theta_{m+1}=\theta_{m}-4\eta\left[  I^{\operatorname{FB}}(\theta
_{m})\right]  ^{-1}\nabla_{\theta}\mathcal{L}(\theta_{m}),
\label{eq:QNG-thermal}
\end{equation}
where $\left[  I^{\operatorname{FB}}(\theta)\right]  ^{-1}$ is the
inverse of the Fisher--Bures information matrix $I^{\operatorname{FB}}(\theta)$
if it is invertible and otherwise it is the pseudoinverse of $I
^{\operatorname{FB}}(\theta)$.

Equation~\eqref{eq:QNG-thermal} gives the basic update rule for any
optimization problem that uses the quantum Boltzmann machine ansatz. It
delineates a hybrid quantum--classical optimization loop, as is the case with
natural gradient for parameterized quantum circuits
\cite{Stokes2020quantumnatural}. Indeed, at each step, one should have a
method to evaluate the loss function gradient $\nabla_{\theta}\mathcal{L}
(\theta_{m})$, and then one can evaluate the $J^{2}$ elements of the 
Fisher--Bures information matrix $I^{\operatorname{FB}}(\theta_{m})$, by using
the quantum algorithm from Section~\ref{sec:QBFB-estimator}.

Let us note that the matrix $I^{\operatorname{FB}}(\theta
_{m})$ is positive definite (and thus invertible) for the common situation in which each $G_j$ acts non-trivially on a constant number of qubits. This follows as a consequence of the last equation of \cite[Section~7.2]{Anshu2020arXiv} and the connection made in~\eqref{eq:anshu-et-al-ineq}.

\subsection{Natural gradient with respect to the Kubo--Mori metric}

An alternative metric to use in natural gradient descent for quantum Boltzmann machines is the Kubo--Mori metric in~\eqref{eq:KM-metric-def}. The reasoning behind the algorithm is the same as given above, and so we highlight only the key differences. Defining
\begin{align}
\left\Vert \theta-\theta_{m}\right\Vert _{g^{\operatorname{KM}}(\theta_{m})}^{2}  &
\coloneqq\left\langle \theta-\theta_{m},g^{\operatorname{KM}}(\theta_{m})\left(  \theta-\theta
_{m}\right)  \right\rangle ,\\
\left[  g^{\operatorname{KM}}(\theta)\right]  _{ij}  &  \coloneqq\frac{1}{2}I_{ij}
^{\operatorname{KM}}(\theta),
\end{align}
the basic update rule for natural gradient in terms of the Kubo--Mori metric is as follows:
\begin{equation}
\theta_{m+1}=\theta_{m}-2\eta\left[  I^{\operatorname{KM}}(\theta
_{m})\right]  ^{-1}\nabla_{\theta}\mathcal{L}(\theta_{m}),
\label{eq:QNG-KM-thermal}
\end{equation}


Due to \eqref{eq:FB-KM-op-ineq} and the statement at the end of Section~\ref{sec:NG-FB}, it follows that the Kubo--Mori information matrix $I^{\operatorname{KM}}(\theta
_{m})$ is positive definite (and thus invertible) for the common situation when each $G_j$ acts non-trivially on a constant number of qubits.  See also~\cite[Theorem~28 and Lemma~29]{Anshu2020arXiv} in this context.

\subsection{Example optimization tasks with natural gradient}

As a first example,  suppose the optimization problem is to estimate the ground-state energy of a
Hamiltonian $H$ that is efficiently measurable. One can combine the algorithms
from~\cite{patel2024quantumboltzmann} and Section~\ref{sec:q-alg-NG-QBM} to
evaluate the update rule in~\eqref{eq:QNG-thermal}. In this case, the loss
function is $\mathcal{L}(\theta)=\operatorname{Tr}[H\rho(\theta)]$, and, as
shown in~\cite{patel2024quantumboltzmann}, each element of the gradient
$\nabla_{\theta}\mathcal{L}(\theta)$ is given by
\begin{equation}
\partial_{j}\mathcal{L}(\theta)=-\frac{1}{2}\operatorname{Tr}\!\left[
\left\{  H,\Phi_{\theta}(G_{j})\right\}  \rho(\theta)\right] 
+\left\langle H\right\rangle _{\rho(\theta)}\left\langle G_{j}\right\rangle
_{\rho(\theta)}.
\end{equation}
One can then use the quantum Boltzmann gradient estimator of
\cite{patel2024quantumboltzmann} to evaluate each element of $\nabla_{\theta
}\mathcal{L}(\theta_{m})$, the quantum Boltzmann--Fisher--Bures estimator of
Section~\ref{sec:q-alg-NG-QBM} to evaluate each element of
$I^{\operatorname{FB}}(\theta_{m})$, and then update according to the rule in~\eqref{eq:QNG-thermal}.

Another example lies in generative modeling, using the results of
\cite{Coopmans2024}. There, the problem is that a state $\omega$ is given, and
one desires to generate a state $\rho(\theta)$ that is close to the given
state $\omega$. A natural measure of closeness is the quantum relative entropy
$D(\omega\Vert\rho(\theta))$, as defined in~\eqref{eq:rel-ent-def}. In this case, we thus set the
loss function $\mathcal{L}(\theta)=D(\omega\Vert\rho(\theta))$, and it was
shown in~\cite{Coopmans2024} that each element of the gradient $\nabla
_{\theta}\mathcal{L}(\theta)$ is given by
\begin{equation}
\partial_{j}\mathcal{L}(\theta)=\left\langle H\right\rangle _{\rho(\theta
)}-\left\langle G_{j}\right\rangle _{\rho(\theta)}.
\end{equation}
Due to this simple form of the gradient, one can use a simple algorithm to evaluate each element of
$\nabla_{\theta}\mathcal{L}(\theta_{m})$, the quantum Boltzmann--Fisher--Bures
estimator of Section~\ref{sec:q-alg-NG-QBM} to evaluate each element of
$I^{\operatorname{FB}}(\theta_{m})$, and then update according to the rule in~\eqref{eq:QNG-thermal}.

In both cases, we could alternatively employ natural gradient descent with respect to the Kubo--Mori metric, as given in~\eqref{eq:QNG-KM-thermal}.

These are but two examples of how natural gradient can be used with the
quantum Boltzmann machine ansatz, and we note again here that the update rule
in~\eqref{eq:QNG-thermal} can be incorporated into any optimization algorithm
that uses the quantum Boltzmann machine ansatz.

Although we do not provide a detailed analysis of the sample complexity of ground-state energy minimization or generative modeling in this paper, as done in the previous works~\cite{Coopmans2024,patel2024quantumboltzmann},  we should note here that, at the least, we expect the number of iterations required for natural gradient to converge to an $\varepsilon$-stationary point to be fewer than those required when performing standard gradient descent. This is due to the advantage that natural gradient has by incorporating the geometry of the landscape when making a decision about which direction navigate to next. We leave it open for now to establish a rigorous convergence analysis for natural gradient with quantum Boltzmann machines. 

The main disadvantage of natural gradient descent compared to standard gradient descent is that the update rules in~\eqref{eq:QNG-thermal} and~\eqref{eq:QNG-KM-thermal} require estimating the $J^2$ parameters in the Fisher--Bures and Kubo--Mori information matrices, respectively. Furthermore, it is necessary to compute their inverses as well. As such, there is a trade-off between this extra sampling and computation time against natural gradient descent's ability to navigate the landscape more judiciously.

\section{Estimation of Hamiltonian parameters from thermal states}

\label{sec:est-ham-param}

In this section, we briefly outline our second application, which is to estimating the parameter vector $\theta$, for a Hamiltonian of the form in~\eqref{eq:param-Ham}, from thermal states of the form in~\eqref{eq:thermal-state}. The setting here is that the tensor-power state $\rho(\theta)^{\otimes n}$, corresponding to $n$ independent copies or samples of $\rho(\theta)$, is available. One then performs a collective measurement $(M^{(n)}_m)_m$ on all $m$ copies, followed by classical processing according to the function $\hat{\theta}(m)$ in~\eqref{eq:parameter-estimate-function}, in order to make an estimate of the parameter vector $\theta$. Under the assumption that the estimator is unbiased, so that
\begin{equation}
\sum_m \operatorname{Tr}[M^{(n)}_m \rho(\theta)^{\otimes n}] \, \hat{\theta}(m) = \theta,
\end{equation}
the covariance matrix in~\eqref{eq:cov-mat-def} is a key figure of merit, determining how well a given estimator performs.

\subsection{Fundamental limitations on multiparameter estimation of thermal states}

One of the seminal results in this direction is the multiparameter Cramer--Rao bound of~\eqref{eq:Cramer--Rao-multiple}. Thus, for our problem at hand, Theorem~\ref{thm:main} provides an analytical expression for the Fisher--Bures information matrix of thermal states, and thus in turn limits the performance of any possible estimator. For sufficiently small system sizes, the Fisher--Bures information matrix elements can be calculated analytically. For larger system sizes and for suitable Hamiltonians that meet the assumptions of the quantum Boltzmann--Fisher--Bures estimator, one can estimate the matrix elements by means of this estimator, in order to understand the fundamental limitations on any possible scheme for estimating parameters of Hamiltonians.

\subsection{Asymptotically optimal strategy for single-parameter estimation of thermal states}

Our findings also apply to single-parameter estimation, that is, when one is trying to determine a single unknown parameter  $\theta_j$ from thermal-state samples of the form in~\eqref{eq:thermal-state}, and all other parameter values are known. For estimating a single parameter~$\theta_j$, an optimal measurement strategy that saturates the quantum Cramer--Rao bound (corresponding to the $j$th diagonal entry of~\eqref{eq:Cramer--Rao-multiple}) involves measuring in the eigenbasis of the  symmetric logarithmic derivative (SLD) operator $L^{(j)}(\theta)$~\cite{braunsteincaves1994}, which is given as follows \cite[Eq.~(86)]{Sidhu2020}:
\begin{equation}
    L^{(j)}(\theta) \coloneqq \sum_{k,\ell} \frac{2}{\lambda
_{k}+\lambda_{\ell}}|k\rangle \!\langle k|\partial_{j}\rho(\theta
)|\ell\rangle \!\langle \ell | .
\end{equation}
Note that the SLD satisfies the following differential equation~\cite[Eq.~(83)]{Sidhu2020}:
\begin{equation}
    \partial_{j}\rho(\theta
) = \frac{1}{2} (\rho(\theta) L^{(j)}(\theta) + L^{(j)}(\theta) \rho(\theta)).
\end{equation}
For our case of interest, we have the following result:

\begin{theorem}
\label{thm:main-sld}For the parameterized family of thermal states in
\eqref{eq:thermal-state}, the SLD operator $L^{(j)}(\theta)$
is given as follows:
\begin{equation}
L^{(j)}(\theta)=-\Phi_{\theta}(G_{j}) + \left\langle
G_{j}\right\rangle _{\rho(\theta)} I,\label{eq:SLD-op}
\end{equation}
where $\Phi_{\theta}$ is the quantum channel defined in~\eqref{eq:q-channel-phi-theta} and $\langle C \rangle_{\sigma} \equiv \operatorname{Tr}[C \sigma]$.
\end{theorem}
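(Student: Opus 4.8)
The plan is to substitute the already-established matrix-element formula directly into the definition of the SLD and then simplify. Recall that the SLD is defined by
\begin{equation}
L^{(j)}(\theta) = \sum_{k,\ell}\frac{2}{\lambda_{k}+\lambda_{\ell}}|k\rangle\!\langle k|\partial_{j}\rho(\theta)|\ell\rangle\!\langle\ell|,
\end{equation}
so the only ingredient needed is an expression for the matrix element $\langle k|\partial_{j}\rho(\theta)|\ell\rangle$ in the eigenbasis of $\rho(\theta)$. This is precisely what is supplied by \eqref{eq:overlap-QFI-term-thermal}: relabeling $k\leftrightarrow\ell$ there and using the symmetry $\lambda_{k}+\lambda_{\ell}=\lambda_{\ell}+\lambda_{k}$ gives
\begin{equation}
\langle k|\partial_{j}\rho(\theta)|\ell\rangle = -\frac{1}{2}\langle k|\Phi_{\theta}(G_{j})|\ell\rangle(\lambda_{k}+\lambda_{\ell}) + \delta_{k\ell}\lambda_{k}\left\langle G_{j}\right\rangle_{\rho(\theta)}.
\end{equation}

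First I would multiply this identity by the prefactor $\frac{2}{\lambda_{k}+\lambda_{\ell}}$. The off-diagonal term loses its eigenvalue dependence and collapses to $-\langle k|\Phi_{\theta}(G_{j})|\ell\rangle$. For the diagonal term, the Kronecker delta forces $k=\ell$, at which point the arithmetic-mean factor simplifies as $\frac{2}{\lambda_{k}+\lambda_{k}}\lambda_{k}=1$; this is exactly the observation, noted in the review section, that $\frac{2}{\lambda_{k}+\lambda_{\ell}}$ is the inverse arithmetic mean of $\lambda_{k}$ and $\lambda_{\ell}$. Hence that term reduces to $\delta_{k\ell}\langle G_{j}\rangle_{\rho(\theta)}$.

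Finally I would resum over $k$ and $\ell$. The $\Phi_{\theta}(G_{j})$ contribution reassembles into $-\Phi_{\theta}(G_{j})$ via its spectral expansion $\sum_{k,\ell}\langle k|\Phi_{\theta}(G_{j})|\ell\rangle|k\rangle\!\langle\ell|$, while the diagonal contribution gives $\langle G_{j}\rangle_{\rho(\theta)}\sum_{k}|k\rangle\!\langle k|=\langle G_{j}\rangle_{\rho(\theta)}I$ by completeness of the eigenbasis. This yields \eqref{eq:SLD-op}.

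There is essentially no hard obstacle here: the entire difficulty has been front-loaded into the derivative formula \eqref{eq:derivative-thermal-state} and its consequence \eqref{eq:overlap-QFI-term-thermal}, which already encode the quantum belief propagation structure of $\partial_{j}\rho(\theta)$. The only point requiring minor care is the cancellation of the prefactor against the $(\lambda_{k}+\lambda_{\ell})$ appearing in the off-diagonal term and the correct evaluation of the diagonal term; once those are handled, the result follows by inspection. As a consistency check, one can verify that $\operatorname{Tr}[\rho(\theta)L^{(j)}(\theta)]=0$, as required of any SLD by the defining relation $\partial_{j}\rho(\theta)=\frac{1}{2}(\rho(\theta)L^{(j)}(\theta)+L^{(j)}(\theta)\rho(\theta))$ together with $\operatorname{Tr}[\rho(\theta)]=1$; indeed this follows from the identity $\langle\Phi_{\theta}(G_{j})\rangle_{\rho(\theta)}=\langle G_{j}\rangle_{\rho(\theta)}$ recorded in \eqref{eq:mean-mu-def}.
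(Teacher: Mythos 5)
Your proposal is correct and follows essentially the same route as the paper's own proof in Appendix~\ref{app:proof-SLD}: substitute the matrix-element formula~\eqref{eq:overlap-QFI-term-thermal} into the spectral definition of the SLD, cancel the $(\lambda_k+\lambda_\ell)$ factor against the inverse arithmetic mean, evaluate the diagonal term, and resum. Your handling of the Kronecker delta in the resummation is in fact slightly cleaner than the paper's, and the final consistency check $\operatorname{Tr}[\rho(\theta)L^{(j)}(\theta)]=0$ is a nice addition not present in the paper.
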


\begin{proof}
See Appendix~\ref{app:proof-SLD}.
\end{proof}

\medskip 

See also \cite[Appendix~C]{abiuso2024fundamentallimitsmetrologythermal} for an alternative representation of the SLD operator for thermal states. A key distinction between the form in Theorem~\ref{thm:main-sld} and that from \cite[Appendix~C]{abiuso2024fundamentallimitsmetrologythermal} is that the former leads to a quantum algorithm for measuring in the eigenbasis of the SLD operator $L^{(j)}(\theta)$, as sketched below.

Indeed, Theorem~\ref{thm:main-sld} provides an analytical form for the SLD operator, and we know from~\cite{braunsteincaves1994} that the ability to measure in its eigenbasis implies an asymptotically optimal strategy for estimating $\theta_j$, in the sense that it saturates the quantum Cramer--Rao bound. As such, to realize an optimal strategy, we need a method for measuring in the eigenbasis of $\Phi_{\theta}(G_{j})$.

With these notions in place, we now present a sketch of an algorithm that realizes the optimal measurement mentioned above, which involves measuring in the eigenbasis of the SLD operator $L^{(j)}(\theta)$. Having said that, we leave the detailed implementation and rigorous complexity analysis of this algorithm for future work. Our algorithm primarily employs the standard quantum phase estimation (QPE) algorithm to project onto the eigenbasis of the observable $\Phi_{\theta}(G_j)$. To employ standard QPE, we require controlled Hamiltonian evolutions. For this purpose, we use a Hamiltonian simulation algorithm based on quantum singular value transformation (QSVT)~\cite{Gilyen2019}. In this framework, the Hamiltonian of interest, denoted by $H$, is first block-encoded in a unitary operator~\cite{Low2019hamiltonian,Gilyen2019}. Subsequently, QSVT is employed to implement a polynomial approximation of the target evolution operator $e^{-iHt}$. In our specific case, the Hamiltonian of interest is~$\Phi_{\theta}(G_j)$.

To begin with, let us discuss how to block-encode $\Phi_{\theta}(G_j)$. We assume that we have access to a unitary~$P$ that prepares the following state:
\begin{equation}
    |\varphi(T)\rangle \coloneqq \sum_{t=0}^{T-1} \sqrt{p_D(t)} |t\rangle,
\end{equation}
where $p_D(t)$ is a discretization of the probability density in \eqref{eq-mt:prob-dens}, so that $P|0\rangle = |\varphi(T)\rangle$. Note that the discretized probability distribution $p_D(t)$ can realize the channel $\Phi_\theta$ exactly if a bound on the spectral norm of $G(\theta)$ is available.
We also assume that we have access to the inverse unitary $P^{\dagger}$. Additionally, we require controlled unitaries $W^{(\pm)}$ defined as controlled Hamiltonian evolutions:
\begin{equation}
    W^{(\pm)} \coloneqq \sum_{t=0}^{T-1} |t\rangle\!\langle t| \otimes e^{\pm iG(\theta)t}.
\end{equation}
Then the following unitary block-encodes  $\Phi_{\theta}(G_j)$:
\begin{equation}
    \left(P^{\dag} \otimes I\right) \left(W^{(-)} \left( I \otimes G_j \right) W^{(+)}\right)\left(P\otimes I\right).
\end{equation}
The quantum circuit representation of the above unitary is depicted in Figure~\ref{fig:block-encode}.

\begin{figure}
    \centering
    \includegraphics[width=\linewidth]{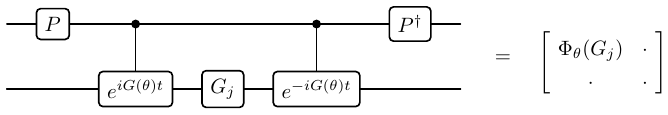}
    \caption{Block-encoding of the operator $\Phi_{\theta}(G_j)$.}
    \label{fig:block-encode}
\end{figure}

That being said, our algorithm for estimating $\theta_j$ consists of the following steps:
\begin{enumerate}

    \item Set $i \coloneqq 1$.

    \item  Block-encode $\Phi_{\theta}(G_j)$ using the aforementioned approach.
    
    \item Prepare the system register in the state $\rho(\theta)$. This state can be written in the eigenbasis $\{|\lambda_r\rangle\}_r$ of $\Phi_{\theta}(G_j)$ in the following way:
    \begin{equation}
        \rho(\theta) \coloneqq \sum_{r, r'} c_{r, r'} |\lambda_r\rangle\!\langle\lambda_{r'}|.
    \end{equation}
    
    \item Append $p$ control registers and initialize these registers to $|0\rangle^{\otimes p}$.
    
    \item Apply the standard QPE algorithm jointly to the control and system registers, which realizes the following transformation:
    \begin{equation}
        |0\rangle^{\otimes p}|\lambda_r\rangle \xrightarrow{\operatorname{QPE}} |\tilde\lambda_r\rangle |\lambda_r\rangle,
    \end{equation}
    where $\tilde\lambda_r$ is the closest $p$-bit estimate of the eigenvalue $\lambda_r$, corresponding to the eigenstate $|\lambda_r\rangle$. 
    
    \item Measure the control  registers in the computational basis and obtain a measurement outcome $\tilde\lambda_r$.
    
    \item Set $b_i \coloneqq \tilde\lambda_r$. Set $i \coloneqq i+1$.
    
    \item Repeat Steps 2-7 $n$ times.

    \item Evaluate $\hat{\theta}_j(b_1, \ldots, b_n)$, where $\hat{\theta}_j$ is a classical map that processes the measurement outcomes to produce an estimate of $\theta_j$.
\end{enumerate}

One important caveat to mention, when measuring the observable $ \Phi_\theta(G_j) $, is that knowledge of the  parameter vector $\theta$ is needed to perform the optimal measurement. This kind of problem occurs often in quantum estimation theory \cite[Section~IV-I]{Sidhu2020}, and the typical way around it is to argue that one begins an estimation scheme with a guess of the unknown parameter and adjusts as the estimation scheme proceeds. In our case, this corresponds to making an initial guess for the parameter $\theta_j$, in order to implement the quantum channel $\Phi_\theta$, and then as further information is acquired, one adjusts the guess for $\theta_j$ and employs the updated guess in future implementations of the channel $\Phi_\theta$. We leave the detailed exploration of this kind of strategy for future work, and note here that it is mentioned in \cite[Section~IV-J]{Sidhu2020} that an iterative scoring algorithm, similar to natural gradient, can be used to iteratively determine an estimate for the unknown parameter $\theta_j$.

\section{Conclusion and future directions}

\label{sec:conclusion}

The main technical results of this paper are exact analytical expressions for
the Fisher--Bures and Kubo--Mori information matrix elements of parameterized thermal
states (see Theorems~\ref{thm:main} and~\ref{thm:main-Kubo-Mori}). Based on these analytical expressions, we
then developed corresponding quantum algorithms, called the quantum
Boltzmann--Fisher--Bures and the quantum Boltzmann--Kubo--Mori estimator, which estimate each matrix element of the respective matrix (see
Section~\ref{sec:q-alg-NG-QBM}). These results have two applications, in a method
for quantum Boltzmann machine learning called natural gradient and in
establishing fundamental limitations on our ability to estimate a Hamiltonian
from a thermal state of the form in~\eqref{eq:thermal-state}. We also sketched an asymptotically optimal approach to single-parameter estimation of thermal states, leaving its detailed study for future work. The natural
gradient descent method developed here is general and broad, such that it can be used
for any optimization problem that employs the quantum Boltzmann machine
ansatz. Combined with the recent findings of~\cite{Coopmans2024} and
\cite{patel2024quantumboltzmann}, our results here imply that natural gradient descent
can be used for both quantum generative modeling and estimating ground-state
energies of Hamiltonians, respectively (as outlined at the end of
Section~\ref{sec:NG-QBM}).

In this paper, we have outlined the theory of natural gradient descent with quantum Boltzmann machines, and in future work, we plan to investigate numerically the performance of the approach. As also mentioned in~\cite{patel2024quantumboltzmann}, it remains a pressing open problem to determine whether quantum Boltzmann machine learning suffers from the barren-plateau problem~\cite{McClean_2018,marrero2021entanglement,arrasmith2022equivalence,holmes2022connecting,Fontana2024,Ragone2024} that plagues quantum machine learning using parameterized quantum circuits. There is evidence from~\cite{Coopmans2024} that quantum Boltzmann machines do not suffer from this problem in certain contexts.

Going forward from here, it is an interesting open problem to determine the
sample complexity of natural gradient with quantum Boltzmann machines for
problems of interest, such as generative modeling and estimating ground-state
energies. Analyses of the performance of natural gradient in the classical
case should be helpful for this purpose
\cite{desjardins2013metricfreenaturalgradientjointtraining}.

Although our statements about limitations on estimation focused on what the Cramer--Rao bound states (see~\eqref{eq:Cramer--Rao-multiple} and~\eqref{eq:scalar-Cramer--Rao-from-matrix-one}), one might also wonder what can be said about sample complexity, i.e., the minimum number of copies of $\sigma(\theta)$ that is needed to learn the parameter vector $\theta$ with success probability larger than $1-\delta$ and error less than $\varepsilon$. Here we note that our Fisher information calculations can be combined with a multiparameter generalization of \cite[Eq.~(53)]{meyer2025} to arrive at a sample complexity statement, but we leave a detailed analysis for future work.

Another interesting direction is to explore the connection between mirror descent and natural gradient descent for quantum Boltzmann machines. This connection was recently made in~\cite{sbahi2022provablyefficientvariationalgenerative}, for the case of quantum Hamiltonian-based models and the Kubo--Mori metric. An advantage of mirror descent is that it is a first-order method, as opposed to natural gradient descent, which is a second-order method and furthermore requires an inverse to be calculated. In~\cite{sbahi2022provablyefficientvariationalgenerative}, the two methods were compared for quantum Hamiltonian-based models, and it was found that mirror descent was more robust and stable than natural gradient descent.

\begin{acknowledgments}
DP and MMW acknowledge support from AFRL under agreement no.~FA8750-23-2-0031.
The U.S.~Government is authorized to
reproduce and distribute reprints for Governmental purposes notwithstanding
any copyright notation thereon. The views and conclusions contained herein are
those of the authors and should not be interpreted as necessarily representing
the official policies or endorsements, either expressed or implied, of Air
Force Research Laboratory or the U.S.~Government.
\end{acknowledgments}

\section*{Author contributions}

\noindent \textbf{Author Contributions}:
The following describes the different contributions of the authors of this work, using roles defined by the CRediT
(Contributor Roles Taxonomy) project~\cite{NISO}:

\medskip 
\noindent \textbf{DP}: Formal Analysis,  Investigation, Methodology, Validation, Writing - Original draft,  Writing - Review \& Editing.

\medskip 
\noindent \textbf{MMW}: Conceptualization, Formal Analysis, Funding acquisition,  Investigation, Methodology, Validation, Writing - Original draft,  Writing - Review \& Editing.

\bibliography{Ref}

\onecolumngrid

\large

\appendix

\section{Additivity of Fisher--Bures information matrix for tensor-product parameterized families}

\label{app:add-FB-info-mat}

In this appendix, we prove that the Fisher--Bures information matrix is
additive with respect to tensor-products. This additivity relation is
essential to establishing~\eqref{eq:Cramer--Rao-multiple}, but it is frequently omitted in arguments used to establish~\eqref{eq:Cramer--Rao-multiple} (see, e.g., the end of \cite[Appendix~H]{Liu2019}, where it is simply written ``Consider the repetition of experiments (denoted as $n$), above bound needs to add a factor of $1/n$.'', thus omitting the essential additivity relation). Specifically, we prove the following:

\begin{theorem}
Let $\left(  \sigma(\theta)\right)  _{\theta\in\mathbb{R}^{J}}$ and $\left(
\tau(\theta)\right)  _{\theta\in\mathbb{R}^{J}}$ be parameterized families of
positive-definite states. Then
\begin{equation}
I^{\operatorname{FB}}(\theta;\left(  \sigma(\theta)\otimes\tau(\theta)\right)
_{\theta\in\mathbb{R}^{J}})=I^{\operatorname{FB}}(\theta;\left(  \sigma
(\theta)\right)  _{\theta\in\mathbb{R}^{J}})+I^{\operatorname{FB}}
(\theta;\left(  \tau(\theta)\right)  _{\theta\in\mathbb{R}^{J}}
).\label{eq:additivity-QFI}
\end{equation}

\end{theorem}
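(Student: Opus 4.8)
The plan is to sidestep the eigenbasis formula~\eqref{eq:qfim-explicit} applied directly to $\sigma(\theta)\otimes\tau(\theta)$: the eigenvalues of the product state are the products $\lambda_k\mu_a$ of the individual eigenvalues, so the weights $2/(\lambda_k\mu_a+\lambda_\ell\mu_b)$ do not factor and the bookkeeping becomes unwieldy. Instead, I would pass to the symmetric logarithmic derivative (SLD) formulation introduced in Section~\ref{sec:est-ham-param}. Writing $L^{(i)}_\rho$ for the unique Hermitian operator solving the Lyapunov-type equation $\partial_i\rho = \tfrac{1}{2}(\rho L^{(i)}_\rho + L^{(i)}_\rho\rho)$ — which exists and is unique precisely because $\rho$ is positive definite — the first step is to record the standard identity
\begin{equation}
I^{\operatorname{FB}}_{ij}(\theta;(\rho(\theta))_{\theta}) = \operatorname{Re}\operatorname{Tr}\!\big[\rho(\theta)\,L^{(i)}_{\rho}\,L^{(j)}_{\rho}\big].
\end{equation}
This follows by inserting the explicit eigenbasis form of the SLD (displayed just before Theorem~\ref{thm:main-sld}, valid for any positive-definite state) into the right-hand side and symmetrizing the summation indices $k\leftrightarrow\ell$, which reproduces~\eqref{eq:qfim-explicit}.

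Next I would establish that the SLD factorizes additively across tensor products, namely $L^{(i)}_{\sigma\otimes\tau} = L^{(i)}_\sigma\otimes I + I\otimes L^{(i)}_\tau$. To see this, apply the product rule $\partial_i(\sigma\otimes\tau) = (\partial_i\sigma)\otimes\tau + \sigma\otimes(\partial_i\tau)$ and verify that the proposed operator satisfies the defining equation for $\sigma\otimes\tau$; uniqueness of the SLD (again from positive-definiteness of $\sigma\otimes\tau$) then forces equality. With this in hand, the final step is to substitute into the identity above, expand $L^{(i)}_{\sigma\otimes\tau}L^{(j)}_{\sigma\otimes\tau}$ into four tensor-factor terms, and take the trace against $\sigma\otimes\tau$. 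The two ``diagonal'' terms yield $\operatorname{Tr}[\sigma L^{(i)}_\sigma L^{(j)}_\sigma]$ and $\operatorname{Tr}[\tau L^{(i)}_\tau L^{(j)}_\tau]$ after using $\operatorname{Tr}[\sigma]=\operatorname{Tr}[\tau]=1$, while the two ``cross'' terms factor into products such as $\operatorname{Tr}[\sigma L^{(i)}_\sigma]\operatorname{Tr}[\tau L^{(j)}_\tau]$ and vanish. The vanishing is the one genuinely load-bearing observation: $\operatorname{Tr}[\rho L^{(i)}_\rho] = \operatorname{Tr}[\partial_i\rho] = \partial_i\operatorname{Tr}[\rho] = \partial_i(1) = 0$, i.e.\ it is exactly trace-preservation of the state that kills the interference between the two factors. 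Taking the real part of the surviving terms then gives~\eqref{eq:additivity-QFI}.

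The main obstacle is not any single hard estimate but rather assembling the three ingredients — the SLD representation of $I^{\operatorname{FB}}$, the additive factorization of the SLD, and uniqueness of the SLD — while being careful that each invocation of positive-definiteness is legitimate (it underwrites convergence of the underlying eigenbasis/integral expressions as well as existence and uniqueness of the SLD). An alternative route would stay entirely within the integral representation~\eqref{eq:qfim-explicit-basis-ind}, but because $e^{-s(\sigma\otimes\tau)}$ does not factor as $e^{-s\sigma}\otimes e^{-s\tau}$, that path reintroduces precisely the non-factoring denominators the SLD formulation was designed to avoid; I therefore regard the SLD approach as the cleanest and would commit to it.
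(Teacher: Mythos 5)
Your proposal is correct, but it takes a genuinely different route from the paper. The paper's proof works directly with the eigenbasis formula~\eqref{eq:qfim-explicit}: it writes spectral decompositions of $\sigma(\theta)$ and $\tau(\theta)$, applies the product rule to $\partial_i(\sigma\otimes\tau)$, extracts the Kronecker deltas $\mu_m\delta_{mn}$ and $\lambda_k\delta_{k\ell}$ from the resulting matrix elements, and then grinds through the four-index sum term by term; the cross terms collapse to $\operatorname{Tr}[\partial_i\sigma]\operatorname{Tr}[\partial_j\tau]$, which vanish by normalization. Your SLD route reaches the same destination more conceptually: the additive factorization $L^{(i)}_{\sigma\otimes\tau}=L^{(i)}_\sigma\otimes I+I\otimes L^{(i)}_\tau$ follows immediately from the product rule plus uniqueness of the solution to the Lyapunov equation (which holds since the superoperator $X\mapsto\rho X+X\rho$ has eigenvalues $\lambda_k+\lambda_\ell>0$ for positive-definite $\rho$), and your identification of $\operatorname{Tr}[\rho L^{(i)}_\rho]=\operatorname{Tr}[\partial_i\rho]=0$ as the load-bearing step is exactly right --- it is the same cancellation the paper's cross terms undergo, just seen at the operator level rather than buried in a four-index sum. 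What the paper's approach buys is self-containment (it never needs the SLD representation $I^{\operatorname{FB}}_{ij}=\operatorname{Re}\operatorname{Tr}[\rho L^{(i)}L^{(j)}]$, which you correctly justify by inserting the eigenbasis form of the SLD and symmetrizing $k\leftrightarrow\ell$, but which is an extra lemma to verify); what yours buys is brevity, reusability of the tensor-product SLD fact (which also underlies Theorem~\ref{thm:main-sld}-style reasoning), and a proof whose structure makes the physical origin of additivity --- no interference between independent subsystems because each SLD is traceless against its own state --- transparent. Both arguments are sound.
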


\begin{proof}
Let spectral decompositions of $\sigma(\theta)$ and $\tau(\theta)$ be as
follows:
\begin{align}
\sigma(\theta)  &  =\sum_{k}\lambda_{k}(\theta)|\psi_{k}(\theta)\rangle
\langle\psi_{k}(\theta)|,\\
\tau(\theta)  &  =\sum_{m}\mu_{m}(\theta)|\varphi_{m}(\theta)\rangle
\langle\varphi_{m}(\theta)|.
\end{align}
This implies that
\begin{equation}
\sigma(\theta)\otimes\tau(\theta)=\sum_{k,m}\lambda_{k}(\theta)\mu_{m}
(\theta)|\psi_{k}(\theta)\rangle\!\langle\psi_{k}(\theta)|\otimes|\varphi
_{m}(\theta)\rangle\!\langle\varphi_{m}(\theta)|.
\end{equation}
Plugging into~\eqref{eq:qfim-explicit}, we find that
\begin{multline}
I_{ij}^{\operatorname{FB}}(\theta;\left(  \sigma(\theta)\otimes\tau
(\theta)\right)  _{\theta\in\mathbb{R}^{J}})\label{eq:QFI-tensor-product}\\
=2\sum_{k,\ell,m,n}\frac{\langle\psi_{k}(\theta)|\langle\varphi_{m}
(\theta)|\partial_{i}\left[  \sigma(\theta)\otimes\tau(\theta)\right]
|\psi_{\ell}(\theta)\rangle|\varphi_{n}(\theta)\rangle\!\langle\psi_{\ell
}(\theta)|\langle\varphi_{n}(\theta)|\partial_{j}\left[  \sigma(\theta
)\otimes\tau(\theta)\right]  |\psi_{k}(\theta)\rangle|\varphi_{m}
(\theta)\rangle}{\lambda_{k}(\theta)\mu_{m}(\theta)+\lambda_{\ell}(\theta
)\mu_{n}(\theta)}.
\end{multline}
Consider that
\begin{equation}
\partial_{i}\left(  \sigma(\theta)\otimes\tau(\theta)\right)  =\left[
\partial_{i}\sigma(\theta)\right]  \otimes\tau(\theta)+\sigma(\theta
)\otimes\left[  \partial_{i}\tau(\theta)\right]  .
\end{equation}
This implies that
\begin{align}
&  \langle\psi_{k}(\theta)|\langle\varphi_{m}(\theta)|\partial_{i}\left[
\sigma(\theta)\otimes\tau(\theta)\right]  |\psi_{\ell}(\theta)\rangle
|\varphi_{n}(\theta)\rangle\nonumber\\
&  =\langle\psi_{k}(\theta)|\langle\varphi_{m}(\theta)|\left[  \left[
\partial_{i}\sigma(\theta)\right]  \otimes\tau(\theta)+\sigma(\theta
)\otimes\left[  \partial_{i}\tau(\theta)\right]  \right]  |\psi_{\ell}
(\theta)\rangle|\varphi_{n}(\theta)\rangle\\
&  =\langle\psi_{k}(\theta)|\langle\varphi_{m}(\theta)|\left[  \left[
\partial_{i}\sigma(\theta)\right]  \otimes\tau(\theta)\right]  |\psi_{\ell
}(\theta)\rangle|\varphi_{n}(\theta)\rangle\nonumber\\
&  \qquad+\langle\psi_{k}(\theta)|\langle\varphi_{m}(\theta)|\left[
\sigma(\theta)\otimes\left[  \partial_{i}\tau(\theta)\right]  \right]
|\psi_{\ell}(\theta)\rangle|\varphi_{n}(\theta)\rangle\\
&  =\langle\psi_{k}(\theta)|\left[  \partial_{i}\sigma(\theta)\right]
|\psi_{\ell}(\theta)\rangle\!\langle\varphi_{m}(\theta)|\tau(\theta)|\varphi
_{n}(\theta)\rangle\nonumber\\
&  \qquad+\langle\psi_{k}(\theta)|\sigma(\theta)|\psi_{\ell}(\theta
)\rangle\!\langle\varphi_{m}(\theta)|\left[  \partial_{i}\tau(\theta)\right]
|\varphi_{n}(\theta)\rangle\\
&  =\langle\psi_{k}(\theta)|\left[  \partial_{i}\sigma(\theta)\right]
|\psi_{\ell}(\theta)\rangle\mu_{m}(\theta)\delta_{mn}+\lambda_{k}
(\theta)\delta_{k\ell}\langle\varphi_{m}(\theta)|\left[  \partial_{i}
\tau(\theta)\right]  |\varphi_{n}(\theta)\rangle.
\end{align}
Plugging into~\eqref{eq:QFI-tensor-product}, we find that
\begin{align}
&  I_{ij}^{\operatorname{FB}}(\theta;\left(  \sigma(\theta)\otimes\tau
(\theta)\right)  _{\theta\in\mathbb{R}^{J}})\\
&  =2\sum_{k,\ell,m,n}\frac{\left(  \langle\psi_{k}(\theta)|\left[
\partial_{i}\sigma(\theta)\right]  |\psi_{\ell}(\theta)\rangle\mu_{m}
(\theta)\delta_{mn}+\lambda_{k}(\theta)\delta_{k\ell}\langle\varphi_{m}
(\theta)|\left[  \partial_{i}\tau(\theta)\right]  |\varphi_{n}(\theta
)\rangle\right)  }{\lambda_{k}(\theta)\mu_{m}(\theta)+\lambda_{\ell}
(\theta)\mu_{n}(\theta)}\times\nonumber\\
&  \qquad\left(  \langle\psi_{\ell}(\theta)|\left[  \partial_{j}\sigma
(\theta)\right]  |\psi_{k}(\theta)\rangle\mu_{m}(\theta)\delta_{mn}
+\lambda_{k}(\theta)\delta_{k\ell}\langle\varphi_{n}(\theta)|\left[
\partial_{j}\tau(\theta)\right]  |\varphi_{m}(\theta)\rangle\right) \\
&  =2\sum_{k,\ell,m,n}\frac{\langle\psi_{k}(\theta)|\left[  \partial_{i}
\sigma(\theta)\right]  |\psi_{\ell}(\theta)\rangle\mu_{m}(\theta)\delta
_{mn}\langle\psi_{\ell}(\theta)|\left[  \partial_{j}\sigma(\theta)\right]
|\psi_{k}(\theta)\rangle\mu_{m}(\theta)\delta_{mn}}{\lambda_{k}(\theta)\mu
_{m}(\theta)+\lambda_{\ell}(\theta)\mu_{n}(\theta)}\nonumber\\
&  \qquad+2\sum_{k,\ell,m,n}\frac{\langle\psi_{k}(\theta)|\left[  \partial
_{i}\sigma(\theta)\right]  |\psi_{\ell}(\theta)\rangle\mu_{m}(\theta
)\delta_{mn}\lambda_{k}(\theta)\delta_{k\ell}\langle\varphi_{n}(\theta
)|\left[  \partial_{j}\tau(\theta)\right]  |\varphi_{m}(\theta)\rangle
}{\lambda_{k}(\theta)\mu_{m}(\theta)+\lambda_{\ell}(\theta)\mu_{n}(\theta
)}\nonumber\\
&  \qquad+2\sum_{k,\ell,m,n}\frac{\lambda_{k}(\theta)\delta_{k\ell}
\langle\varphi_{m}(\theta)|\left[  \partial_{i}\tau(\theta)\right]
|\varphi_{n}(\theta)\rangle\!\langle\psi_{\ell}(\theta)|\left[  \partial
_{j}\sigma(\theta)\right]  |\psi_{k}(\theta)\rangle\mu_{m}(\theta)\delta_{mn}
}{\lambda_{k}(\theta)\mu_{m}(\theta)+\lambda_{\ell}(\theta)\mu_{n}(\theta
)}\nonumber\\
&  \qquad+2\sum_{k,\ell,m,n}\frac{\lambda_{k}(\theta)\delta_{k\ell}
\langle\varphi_{m}(\theta)|\left[  \partial_{i}\tau(\theta)\right]
|\varphi_{n}(\theta)\rangle\lambda_{k}(\theta)\delta_{k\ell}\langle\varphi
_{n}(\theta)|\left[  \partial_{j}\tau(\theta)\right]  |\varphi_{m}
(\theta)\rangle}{\lambda_{k}(\theta)\mu_{m}(\theta)+\lambda_{\ell}(\theta
)\mu_{n}(\theta)}\\
&  =2\sum_{k,\ell,m}\frac{\langle\psi_{k}(\theta)|\left[  \partial_{i}
\sigma(\theta)\right]  |\psi_{\ell}(\theta)\rangle\left[  \mu_{m}
(\theta)\right]  ^{2}\langle\psi_{\ell}(\theta)|\left[  \partial_{j}
\sigma(\theta)\right]  |\psi_{k}(\theta)\rangle}{\left[  \lambda_{k}
(\theta)+\lambda_{\ell}(\theta)\right]  \mu_{m}(\theta)}\nonumber\\
&  \qquad+2\sum_{k,m}\frac{\langle\psi_{k}(\theta)|\left[  \partial_{i}
\sigma(\theta)\right]  |\psi_{k}(\theta)\rangle\mu_{m}(\theta)\lambda
_{k}(\theta)\langle\varphi_{m}(\theta)|\left[  \partial_{j}\tau(\theta
)\right]  |\varphi_{m}(\theta)\rangle}{2\lambda_{k}(\theta)\mu_{m}(\theta
)}\nonumber\\
&  \qquad+2\sum_{k,m}\frac{\lambda_{k}(\theta)\langle\varphi_{m}
(\theta)|\left[  \partial_{i}\tau(\theta)\right]  |\varphi_{m}(\theta
)\rangle\!\langle\psi_{k}(\theta)|\left[  \partial_{j}\sigma(\theta)\right]
|\psi_{k}(\theta)\rangle\mu_{m}(\theta)}{2\lambda_{k}(\theta)\mu_{m}(\theta
)}\nonumber\\
&  \qquad+2\sum_{k,m,n}\frac{\left[  \lambda_{k}(\theta)\right]  ^{2}
\langle\varphi_{m}(\theta)|\left[  \partial_{i}\tau(\theta)\right]
|\varphi_{n}(\theta)\rangle\!\langle\varphi_{n}(\theta)|\left[  \partial_{j}
\tau(\theta)\right]  |\varphi_{m}(\theta)\rangle}{\lambda_{k}(\theta)\left[
\mu_{m}(\theta)+\mu_{n}(\theta)\right]  }\\
&  = 2\sum_{k,\ell,m}\frac{\langle\psi_{k}(\theta)|\left[  \partial_{i}
\sigma(\theta)\right]  |\psi_{\ell}(\theta)\rangle\!\langle\psi_{\ell}
(\theta)|\left[  \partial_{j}\sigma(\theta)\right]  |\psi_{k}(\theta)\rangle
}{\lambda_{k}(\theta)+\lambda_{\ell}(\theta)}\mu_{m}(\theta)\nonumber\\
&  \qquad+\sum_{k,m}\langle\psi_{k}(\theta)|\left[  \partial_{i}\sigma
(\theta)\right]  |\psi_{k}(\theta)\rangle\!\langle\varphi_{m}(\theta)|\left[
\partial_{j}\tau(\theta)\right]  |\varphi_{m}(\theta)\rangle\nonumber\\
&  \qquad+\sum_{k,m}\langle\varphi_{m}(\theta)|\left[  \partial_{i}\tau
(\theta)\right]  |\varphi_{m}(\theta)\rangle\!\langle\psi_{k}(\theta)|\left[
\partial_{j}\sigma(\theta)\right]  |\psi_{k}(\theta)\rangle\nonumber\\
&  \qquad+2\sum_{k,m,n}\lambda_{k}(\theta)\frac{\langle\varphi_{m}
(\theta)|\left[  \partial_{i}\tau(\theta)\right]  |\varphi_{n}(\theta
)\rangle\!\langle\varphi_{n}(\theta)|\left[  \partial_{j}\tau(\theta)\right]
|\varphi_{m}(\theta)\rangle}{\mu_{m}(\theta)+\mu_{n}(\theta)}\\
&  =2\sum_{k,\ell}\frac{\langle\psi_{k}(\theta)|\left[  \partial_{i}
\sigma(\theta)\right]  |\psi_{\ell}(\theta)\rangle\!\langle\psi_{\ell}
(\theta)|\left[  \partial_{j}\sigma(\theta)\right]  |\psi_{k}(\theta)\rangle
}{\lambda_{k}(\theta)+\lambda_{\ell}(\theta)}\nonumber\\
&  \qquad+\operatorname{Tr}[\partial_{i}\sigma(\theta)]\operatorname{Tr}
[\partial_{j}\tau(\theta)]+\operatorname{Tr}[\partial_{i}\tau(\theta
)]\operatorname{Tr}[\partial_{j}\sigma(\theta)]\nonumber\\
&  \qquad+2\sum_{m,n}\frac{\langle\varphi_{m}(\theta)|\left[  \partial_{i}
\tau(\theta)\right]  |\varphi_{n}(\theta)\rangle\!\langle\varphi_{n}
(\theta)|\left[  \partial_{j}\tau(\theta)\right]  |\varphi_{m}(\theta)\rangle
}{\mu_{m}(\theta)+\mu_{n}(\theta)}\\
&  =I_{ij}^{\operatorname{FB}}(\theta;\left(  \sigma(\theta)\right)
_{\theta\in\mathbb{R}^{J}})+I_{ij}^{\operatorname{FB}}(\theta;\left(
\tau(\theta)\right)  _{\theta\in\mathbb{R}^{J}}).
\end{align}
This concludes the proof.
\end{proof}

\bigskip 

Finally, to establish the additivity relation
\begin{equation}
I^{\operatorname{FB}}(\theta;\left(  \sigma^{\otimes n}(\theta)\right)
_{\theta\in\mathbb{R}^{J}}) = n\cdot I^{\operatorname{FB}}(\theta;\left(  \sigma(\theta)\right)
_{\theta\in\mathbb{R}^{J}}),
\end{equation}
which we stress again is needed to establish~\eqref{eq:Cramer--Rao-multiple}, we repeatedly apply
\eqref{eq:additivity-QFI} as follows:
\begin{align}
I^{\operatorname{FB}}(\theta;\left(  \sigma^{\otimes n}(\theta)\right)
_{\theta\in\mathbb{R}^{J}})  & =I^{\operatorname{FB}}(\theta;\left(
\sigma(\theta)\otimes\sigma^{\otimes n-1}(\theta)\right)  _{\theta
\in\mathbb{R}^{J}})\\
& =I^{\operatorname{FB}}(\theta;\left(  \sigma(\theta)\right)  _{\theta
\in\mathbb{R}^{J}})+I^{\operatorname{FB}}(\theta;\left(  \sigma^{\otimes
n-1}(\theta)\right)  _{\theta\in\mathbb{R}^{J}})\\
& =\cdots\\
& =n\cdot I^{\operatorname{FB}}(\theta;\left(  \sigma(\theta)\right)
_{\theta\in\mathbb{R}^{J}}).
\end{align}

\section{Proof of Theorem~\ref{thm:main}}

\label{app:FB-info-mat}

Plugging~\eqref{eq:derivative-thermal-state} into~\eqref{eq:qfim-explicit}, we find
that we should consider the following term:
\begin{align}
\langle k|\partial_{i}\rho(\theta)|\ell\rangle &  =\langle k|\left[  -\frac
{1}{2}\left\{  \Phi_{\theta}(G_{i}),\rho(\theta)\right\}  +\rho(\theta
)\left\langle G_{i}\right\rangle _{\rho(\theta)}\right]  |\ell\rangle\\
&  =-\frac{1}{2}\langle k|\left\{  \Phi_{\theta}(G_{i}),\rho(\theta)\right\}
|\ell\rangle+\langle k|\rho(\theta)|\ell\rangle\left\langle G_{i}\right\rangle
_{\rho(\theta)}\\
&  =-\frac{1}{2}\left(  \langle k|\Phi_{\theta}(G_{i})|\ell\rangle
\lambda_{\ell}+\langle k|\Phi_{\theta}(G_{i})|\ell\rangle\lambda_{k}\right)
+\delta_{k\ell}\lambda_{\ell}\left\langle G_{i}\right\rangle _{\rho(\theta)}\\
&  =-\frac{1}{2}\langle k|\Phi_{\theta}(G_{i})|\ell\rangle\left(  \lambda
_{k}+\lambda_{\ell}\right)  +\delta_{k\ell}\lambda_{\ell}\left\langle
G_{i}\right\rangle _{\rho(\theta)}.\label{eq:basic-calc-deriv}
\end{align}
This implies that
\begin{equation}
\langle\ell|\partial_{j}\rho(\theta)|k\rangle=-\frac{1}{2}\langle\ell
|\Phi_{\theta}(G_{j})|k\rangle\left(  \lambda_{k}+\lambda_{\ell}\right)
+\delta_{k\ell}\lambda_{\ell}\left\langle G_{j}\right\rangle _{\rho(\theta
)}\label{eq:basic-calc-deriv-HC}
\end{equation}
Plugging~\eqref{eq:basic-calc-deriv} and~\eqref{eq:basic-calc-deriv-HC}
into~\eqref{eq:qfim-explicit}, we find that
\begin{align}
I_{ij}^{\operatorname{FB}} &  =2\sum_{k,\ell}\frac{\langle k|\partial_{i}\rho(\theta
)|\ell\rangle\!\langle\ell|\partial_{j}\rho(\theta)|k\rangle}{\lambda
_{k}+\lambda_{\ell}}\nonumber\\
&  =2\sum_{k,\ell}\frac{\left[
\begin{array}
[c]{c}
\left(  -\frac{1}{2}\langle k|\Phi_{\theta}(G_{i})|\ell\rangle\left(
\lambda_{k}+\lambda_{\ell}\right)  +\delta_{k\ell}\lambda_{\ell}\left\langle
G_{i}\right\rangle _{\rho(\theta)}\right)  \times\\
\left(  -\frac{1}{2}\langle\ell|\Phi_{\theta}(G_{j})|k\rangle\left(
\lambda_{k}+\lambda_{\ell}\right)  +\delta_{k\ell}\lambda_{\ell}\left\langle
G_{j}\right\rangle _{\rho(\theta)}\right)
\end{array}
\right]  }{\lambda_{k}+\lambda_{\ell}}\\
&  =2\sum_{k,\ell}\frac{1}{4}\frac{\langle k|\Phi_{\theta}(G_{i})|\ell
\rangle\!\langle\ell|\Phi_{\theta}(G_{j})|k\rangle\left(  \lambda_{k}
+\lambda_{\ell}\right)  ^{2}}{\lambda_{k}+\lambda_{\ell}}\nonumber\\
&  \qquad+2\sum_{k,\ell}\left(  -\frac{1}{2}\right)  \frac{\langle
k|\Phi_{\theta}(G_{i})|\ell\rangle\left(  \lambda_{k}+\lambda_{\ell}\right)
\delta_{k\ell}\lambda_{\ell}\left\langle G_{j}\right\rangle _{\rho(\theta)}
}{\lambda_{k}+\lambda_{\ell}}\nonumber\\
&  \qquad+2\sum_{k,\ell}\left(  -\frac{1}{2}\right)  \frac{\langle\ell
|\Phi_{\theta}(G_{j})|k\rangle\left(  \lambda_{k}+\lambda_{\ell}\right)
\delta_{k\ell}\lambda_{\ell}\left\langle G_{i}\right\rangle _{\rho(\theta)}
}{\lambda_{k}+\lambda_{\ell}}\nonumber\\
&  \qquad+2\sum_{k,\ell}\frac{\delta_{k\ell}\lambda_{\ell}\left\langle
G_{i}\right\rangle _{\rho(\theta)}\delta_{k\ell}\lambda_{\ell}\left\langle
G_{j}\right\rangle _{\rho(\theta)}}{\lambda_{k}+\lambda_{\ell}}\\
&  =\frac{1}{2}\sum_{k,\ell}\langle k|\Phi_{\theta}(G_{i})|\ell\rangle
\langle\ell|\Phi_{\theta}(G_{j})|k\rangle\left(  \lambda_{k}+\lambda_{\ell
}\right)  \nonumber\\
&  \qquad-\sum_{k,\ell}\langle k|\Phi_{\theta}(G_{i})|\ell\rangle\delta
_{k\ell}\lambda_{\ell}\left\langle G_{j}\right\rangle _{\rho(\theta)}
-\sum_{k,\ell}\langle\ell|\Phi_{\theta}(G_{j})|k\rangle\delta_{k\ell}
\lambda_{\ell}\left\langle G_{i}\right\rangle _{\rho(\theta)}\nonumber\\
&  \qquad+2\sum_{k,\ell}\frac{\delta_{k\ell}\lambda_{\ell}\left\langle
G_{i}\right\rangle _{\rho(\theta)}\delta_{k\ell}\lambda_{\ell}\left\langle
G_{j}\right\rangle _{\rho(\theta)}}{\lambda_{k}+\lambda_{\ell}}\\
&  =\frac{1}{2}\sum_{k,\ell}\langle k|\Phi_{\theta}(G_{i})|\ell\rangle
\langle\ell|\Phi_{\theta}(G_{j})|k\rangle\lambda_{k}+\frac{1}{2}\sum_{k,\ell
}\langle k|\Phi_{\theta}(G_{i})|\ell\rangle\!\langle\ell|\Phi_{\theta}
(G_{j})|k\rangle\lambda_{\ell}\nonumber\\
&  \qquad-\sum_{k}\langle k|\Phi_{\theta}(G_{i})|k\rangle\lambda
_{k}\left\langle G_{j}\right\rangle _{\rho(\theta)}-\sum_{k}\langle
k|\Phi_{\theta}(G_{j})|k\rangle\lambda_{k}\left\langle G_{i}\right\rangle
_{\rho(\theta)}\nonumber\\
&  \qquad+2\sum_{k}\frac{\lambda_{k}^{2}\left\langle G_{i}\right\rangle
_{\rho(\theta)}\left\langle G_{j}\right\rangle _{\rho(\theta)}}{2\lambda_{k}
}\\
&  =\frac{1}{2}\operatorname{Tr}[\Phi_{\theta}(G_{j})\rho(\theta)\Phi_{\theta
}(G_{i})]+\frac{1}{2}\operatorname{Tr}[\Phi_{\theta}(G_{i})\rho(\theta
)\Phi_{\theta}(G_{j})]\nonumber\\
&  \qquad-\operatorname{Tr}[\rho(\theta)\Phi_{\theta}(G_{i})]\left\langle
G_{j}\right\rangle _{\rho(\theta)}-\operatorname{Tr}[\rho(\theta)\Phi_{\theta
}(G_{j})]\left\langle G_{i}\right\rangle _{\rho(\theta)}\nonumber\\
&  \qquad+\left\langle G_{i}\right\rangle _{\rho(\theta)}\left\langle
G_{j}\right\rangle _{\rho(\theta)}\\
&  =\frac{1}{2}\operatorname{Tr}[\Phi_{\theta}(G_{j})\rho(\theta)\Phi_{\theta
}(G_{i})]+\frac{1}{2}\operatorname{Tr}[\Phi_{\theta}(G_{i})\rho(\theta
)\Phi_{\theta}(G_{j})]\nonumber\\
&  \qquad-\left\langle G_{i}\right\rangle _{\rho(\theta)}\left\langle
G_{j}\right\rangle _{\rho(\theta)}\\
&  =\operatorname{Re}\left[  \operatorname{Tr}[\Phi_{\theta}(G_{j})\rho
(\theta)\Phi_{\theta}(G_{i})\right]  ]-\left\langle G_{i}\right\rangle
_{\rho(\theta)}\left\langle G_{j}\right\rangle _{\rho(\theta)}.
\end{align}

\section{Proof of Theorem~\ref{thm:main-Kubo-Mori}}

\label{app:KM-info-mat-proof}

Consider that
\begin{align}  I_{ij}^{\operatorname{KM}}& =\sum_{k,\ell}\frac{\ln \lambda_k - \ln \lambda_{\ell}}{\lambda_{k}-\lambda_{\ell}}  \langle k|\partial
_{i}\rho(\theta)|\ell\rangle\!\langle\ell|\partial_{j}\rho(\theta)|k\rangle
  \\
& = \sum_{k,\ell}\frac{\ln \lambda_k - \ln \lambda_{\ell} }{\lambda_{k}-\lambda_{\ell}}\left[
\begin{array}
[c]{c}
\left(  -\frac{1}{2}\langle k|\Phi_{\theta}(G_{i})|\ell\rangle\left(
\lambda_{k}+\lambda_{\ell}\right)  +\delta_{k\ell}\lambda_{\ell}\left\langle
G_{i}\right\rangle _{\rho(\theta)}\right)  \times\\
\left(  -\frac{1}{2}\langle\ell|\Phi_{\theta}(G_{j})|k\rangle\left(
\lambda_{k}+\lambda_{\ell}\right)  +\delta_{k\ell}\lambda_{\ell}\left\langle
G_{j}\right\rangle _{\rho(\theta)}\right)
\end{array}
\right] \\
&  =\sum_{k,\ell}\frac{1}{4}\frac{\ln \lambda_k - \ln \lambda_{\ell}}
{\lambda_{k}-\lambda_{\ell}}  \langle
k|\Phi_{\theta}(G_{i})|\ell\rangle\!\langle\ell|\Phi_{\theta}(G_{j}
)|k\rangle  \left(  \lambda_{k}+\lambda_{\ell}\right)  ^{2}\nonumber\\
&  \qquad+\sum_{k,\ell}-\frac{1}{2}\frac{\ln \lambda_k - \ln \lambda_{\ell}}{\lambda_{k}-\lambda_{\ell}}  \langle
k|\Phi_{\theta}(G_{i})|\ell\rangle\left(  \lambda_{k}+\lambda_{\ell}\right)
\delta_{k\ell}\lambda_{\ell}\left\langle G_{j}\right\rangle _{\rho(\theta
)}  \nonumber\\
&  \qquad+\sum_{k,\ell}-\frac{1}{2}\frac{\ln \lambda_k - \ln \lambda_{\ell}}{\lambda_{k}-\lambda_{\ell}}  \langle
\ell|\Phi_{\theta}(G_{j})|k\rangle\left(  \lambda_{k}+\lambda_{\ell}\right)
\delta_{k\ell}\lambda_{\ell}\left\langle G_{i}\right\rangle _{\rho(\theta
)}  \nonumber\\
&  \qquad+\sum_{k,\ell}\frac{\ln \lambda_k - \ln \lambda_{\ell}
}{\lambda_{k}-\lambda_{\ell}}  \delta_{k\ell}
\lambda_{\ell}\left\langle G_{i}\right\rangle _{\rho(\theta)}\delta_{k\ell
}\lambda_{\ell}\left\langle G_{j}\right\rangle _{\rho(\theta)}\\
&  =\sum_{k,\ell}\frac{1}{4}\frac{(\ln \lambda_k - \ln \lambda_{\ell})\left(  \lambda_{k}+\lambda_{\ell}\right)  ^{2}}
{\lambda_{k}-\lambda_{\ell}}  \langle
k|\Phi_{\theta}(G_{i})|\ell\rangle\!\langle\ell|\Phi_{\theta}(G_{j}
)|k\rangle  \nonumber\\
&  \qquad+\sum_{k}\left(-\frac{1}{2}\right) \frac{1}{\lambda_{k}}  \langle
k|\Phi_{\theta}(G_{i})|k\rangle\left(  2\lambda_{k}\right)
\lambda_{k}\left\langle G_{j}\right\rangle _{\rho(\theta
)}  +\sum_{k} \left(-\frac{1}{2}\right)\frac{1}{\lambda_{k}} \langle
k|\Phi_{\theta}(G_{j})|k\rangle\left(  2\lambda_{k}\right)
\lambda_{k}\left\langle G_{i}\right\rangle _{\rho(\theta
)}  \nonumber\\
&  \qquad+\sum_{k}\frac{1
}{\lambda_{k}} 
\lambda_{k}\left\langle G_{i}\right\rangle _{\rho(\theta)}\lambda_{k}\left\langle G_{j}\right\rangle _{\rho(\theta)}\\
&  =\sum_{k,\ell}\frac{1}{4}\frac{(\ln \lambda_k - \ln \lambda_{\ell})\left(  \lambda_{k}+\lambda_{\ell}\right)  ^{2}}
{\lambda_{k}-\lambda_{\ell}}  \langle
k|\Phi_{\theta}(G_{i})|\ell\rangle\!\langle\ell|\Phi_{\theta}(G_{j}
)|k\rangle  \nonumber\\
&  \qquad-\sum_{k}  \langle
k|\Phi_{\theta}(G_{i})|k\rangle
\lambda_{k}\left\langle G_{j}\right\rangle _{\rho(\theta
)}  -\sum_{k} \langle
k|\Phi_{\theta}(G_{j})|k\rangle
\lambda_{k}\left\langle G_{i}\right\rangle _{\rho(\theta
)}  +\sum_{k} 
\left\langle G_{i}\right\rangle _{\rho(\theta)}\lambda_{k}\left\langle G_{j}\right\rangle _{\rho(\theta)}\\
&  =\sum_{k,\ell}\frac{1}{4}\frac{(\ln \lambda_k - \ln \lambda_{\ell})\left(  \lambda_{k}+\lambda_{\ell}\right)  ^{2}}
{\lambda_{k}-\lambda_{\ell}}  \langle
k|\Phi_{\theta}(G_{i})|\ell\rangle\!\langle\ell|\Phi_{\theta}(G_{j}
)|k\rangle  \nonumber\\
&  \qquad-\operatorname{Tr}\left[\rho(\theta) \Phi_{\theta}(G_i) \right]\left\langle G_{j}\right\rangle_{\rho(\theta
)} -\operatorname{Tr}\left[\rho(\theta) \Phi_{\theta}(G_j) \right]\left\langle G_{i}\right\rangle _{\rho(\theta
)}+\left\langle G_{i}\right\rangle _{\rho(\theta
)}\left\langle G_{j}\right\rangle _{\rho(\theta
)}\\
& =\sum_{k,\ell}\frac{1}{4}\frac{(\ln \lambda_k - \ln \lambda_{\ell})\left(  \lambda_{k}+\lambda_{\ell}\right)  ^{2}}
{\lambda_{k}-\lambda_{\ell}}  \langle
k|\Phi_{\theta}(G_{i})|\ell\rangle\!\langle\ell|\Phi_{\theta}(G_{j}
)|k\rangle  -\left\langle G_{i}\right\rangle _{\rho(\theta
)}\left\langle G_{j}\right\rangle _{\rho(\theta
)}.\label{eq:KM-expansion}
\end{align}
The fourth equality is a consequence of the following fact:
\begin{equation}
    \lim_{x\rightarrow y} \frac{\ln x - \ln y}{ x-y} = \frac{1}{y}.
\end{equation}
Now consider that $G(\theta) = \sum_k \mu_k |k\rangle\!\langle k|$. This implies that for all $k$, we have $\lambda_k =\frac{e^{-\mu_k}}{Z}$, where $Z$ is the partition function. Plugging this into~\eqref{eq:KM-expansion}, we obtain:
\begin{align}
    & \sum_{k,\ell}\frac{1}{4}\frac{(\ln \lambda_k - \ln \lambda_{\ell})\left(  \lambda_{k}+\lambda_{\ell}\right)  ^{2}}
{\lambda_{k}-\lambda_{\ell}}  \langle
k|\Phi_{\theta}(G_{i})|\ell\rangle\!\langle\ell|\Phi_{\theta}(G_{j}
)|k\rangle\nonumber\\
    & =\sum_{k,\ell}\frac{1}{4}\frac{\left(\ln \frac{e^{-\mu_k}}{Z} - \ln \frac{e^{-\mu_\ell}}{Z}\right)\left(  \frac{e^{-\mu_k}}{Z}+\frac{e^{-\mu_\ell}}{Z}\right)  ^{2}}
{\frac{e^{-\mu_k}}{Z}-\frac{e^{-\mu_\ell}}{Z}}\langle
k|\Phi_{\theta}(G_{i})|\ell\rangle\!\langle\ell|\Phi_{\theta}(G_{j}
)|k\rangle  \\
& =\sum_{k,\ell}\frac{1}{4Z}\frac{\left(-\mu_k + \mu_\ell\right)\left(  e^{-\mu_k}+e^{-\mu_\ell}\right)  ^{2}}
{e^{-\mu_k}-e^{-\mu_\ell}}\langle
k|\Phi_{\theta}(G_{i})|\ell\rangle\!\langle\ell|\Phi_{\theta}(G_{j}
)|k\rangle \\
& =\sum_{k,\ell}\frac{1}{4Z}\frac{\left(-\mu_k + \mu_\ell\right)\left(  e^{-\mu_k}+e^{-\mu_\ell}\right)  }
{\frac{e^{-\mu_k}-e^{-\mu_\ell}}{e^{-\mu_k}+e^{-\mu_\ell}}}  \langle
k|\Phi_{\theta}(G_{i})|\ell\rangle\!\langle\ell|\Phi_{\theta}(G_{j}
)|k\rangle \\
& =\sum_{k,\ell}\frac{1}{4Z}\frac{\left(-\mu_k + \mu_\ell\right)\left(  e^{-\mu_k}+e^{-\mu_\ell}\right)  }
{\frac{e^{-\mu_k + \mu_\ell} - 1}{e^{-\mu_k + \mu_\ell}+1}} \langle
k|\Phi_{\theta}(G_{i})|\ell\rangle\!\langle\ell|\Phi_{\theta}(G_{j}
)|k\rangle \\
& =\sum_{k,\ell}\frac{1}{4Z}\frac{\left(-\mu_k + \mu_\ell\right)\left(  e^{-\mu_k}+e^{-\mu_\ell}\right)  }
{\tanh\!\left(\frac{-\mu_k + \mu_\ell}{2}\right)} \langle
k|\Phi_{\theta}(G_{i})|\ell\rangle\!\langle\ell|\Phi_{\theta}(G_{j}
)|k\rangle \\
& =\sum_{k,\ell}\frac{1}{4Z}\frac{\left(-\mu_k + \mu_\ell\right)\left(  e^{-\mu_k}+e^{-\mu_\ell}\right)  }
{\tanh\!\left(\frac{-\mu_k + \mu_\ell}{2}\right)} \langle
k| \int_{\mathbb{R}} dt\ p(t)\ e^{iG(\theta)t} G_i e^{-iG(\theta)t}|\ell\rangle\!\langle\ell|\Phi_{\theta}(G_{j}
)|k\rangle \\
& =\sum_{k,\ell}\frac{1}{4Z}\frac{\left(-\mu_k + \mu_\ell\right)\left(  e^{-\mu_k}+e^{-\mu_\ell}\right)  }
{\tanh\!\left(\frac{-\mu_k + \mu_\ell}{2}\right)}\times \nonumber \\
& \qquad   \langle
k| \int_{\mathbb{R}} dt\ p(t)\ \left(\sum_m |m\rangle\!\langle m| e^{i\mu_m t} \right)G_i \left(\sum_n |n\rangle\!\langle n| e^{-i\mu_n t} \right)|\ell\rangle\!\langle\ell|\Phi_{\theta}(G_{j}
)|k\rangle \\
& =\sum_{k,\ell}\frac{1}{4Z}\frac{\left(-\mu_k + \mu_\ell\right)\left(  e^{-\mu_k}+e^{-\mu_\ell}\right)  }
{\tanh\!\left(\frac{-\mu_k + \mu_\ell}{2}\right)} \int_{\mathbb{R}} dt\ p(t)\ e^{i\mu_k t} \langle k | G_i|\ell\rangle  e^{-i\mu_\ell t} \langle\ell|\Phi_{\theta}(G_{j}
)|k\rangle \\
& =\sum_{k,\ell}\frac{1}{4Z}\frac{\left(-\mu_k + \mu_\ell\right)\left(  e^{-\mu_k}+e^{-\mu_\ell}\right)  }
{\tanh\!\left(\frac{-\mu_k + \mu_\ell}{2}\right)} \int_{\mathbb{R}} dt\ p(t)\ e^{-i(\mu_\ell- \mu_k) t} \langle k | G_i|\ell\rangle   \langle\ell|\Phi_{\theta}(G_{j}
)|k\rangle \\
& =\sum_{k,\ell}\frac{1}{4Z}\frac{\left(-\mu_k + \mu_\ell\right)\left(  e^{-\mu_k}+e^{-\mu_\ell}\right)  }
{\tanh\!\left(\frac{-\mu_k + \mu_\ell}{2}\right)} \frac{\tanh\!\left(\frac{-\mu_k + \mu_\ell}{2}\right)}{\frac{\left(-\mu_k + \mu_\ell\right)}{2}}\langle k | G_i|\ell\rangle   \langle\ell|\Phi_{\theta}(G_{j}
)|k\rangle \\
& =\sum_{k,\ell}\frac{1}{2Z}\left(  e^{-\mu_k}+e^{-\mu_\ell}\right)  
\langle k | G_i|\ell\rangle   \langle\ell|\Phi_{\theta}(G_{j}
)|k\rangle \\
& =\sum_{k,\ell}\frac{1}{2}\left(  \frac{e^{-\mu_k}}{Z}+\frac{e^{-\mu_\ell}}{Z}\right)  
\langle k | G_i|\ell\rangle   \langle\ell|\Phi_{\theta}(G_{j}
)|k\rangle \\
& =\sum_{k,\ell}\frac{1}{2}\left(  \lambda_k +\lambda_\ell\right)  
\langle k | G_i|\ell\rangle   \langle\ell|\Phi_{\theta}(G_{j}
)|k\rangle \\
& =\frac{1}{2}\operatorname{Tr}\left[\rho(\theta)  G_i \Phi_{\theta}(G_{j}
)  \right]+\frac{1}{2}\operatorname{Tr}\left[ G_i \rho(\theta)\Phi_{\theta}(G_{j}
)  \right]\\
& =\frac{1}{2}\operatorname{Tr}\left[ \{G_i,\Phi_{\theta}(G_{j}
)\} \rho(\theta) \right] .
\label{eq:last-line-KM}
\end{align}
When combining~\eqref{eq:last-line-KM} with~\eqref{eq:KM-expansion}, the proof is concluded.

\section{Review of Hadamard test}

\label{app:quant_primitive}

In this appendix, we recall a fundamental primitive that estimates the following quantity:
\begin{equation}
    \frac{1}{2}\operatorname{Tr}\!\left[\left(U_1^\dag U_0  +  U_0^{\dagger}U_1\right)\rho\right]\label{eq:U0_U1},
\end{equation}
where $U_0$ and $U_1$ are unitaries and $\rho$ is a quantum state. 
This primitive was reviewed  in~\cite[Section~6 of Appendix B]{patel2024quantumboltzmann}, and we review it here for convenience.
For this primitive, we assume that we have access to multiple copies of $\rho$. The quantum circuit for this primitive is depicted in Figure~\ref{fig:qc-primitive}, where the controlled gate is given by $|0\rangle\!\langle 0 | \otimes U_0 + |1\rangle\!\langle 1 | \otimes U_1$. This circuit consists of the following two quantum registers:
\begin{itemize}
    \item a control register, initialized in the state $|0\rangle \!\langle 0|$,
    \item a system register, which is in the state $\rho$.
\end{itemize}
After executing this circuit and obtaining a measurement outcome~$b$ in the control register, the final state $\sigma_{\operatorname{sub}}^{(b)}$ (sub-normalized) of the system register is as follows, where $b\in\{0, 1\}$:
\begin{align}
    \sigma_{\operatorname{sub}}^{(b)} & = \left(\frac{U_0 + (-1)^b U_1}{2}\right) \rho \left(\frac{U_0^\dag + (-1)^b U_1^{\dagger}}{2}\right)\\
    & = \frac{1}{4}\left( U_0\rho U_0^\dag +(-1)^b U_0\rho U_1^\dag + (-1)^b U_1\rho U_0^{\dagger} + U_1\rho U_1^{\dagger} \right).
\end{align}
The probability $p_{b}$ of obtaining the measurement outcome $b$ is given as:
\begin{align}
    p_{b} = \operatorname{Tr}\!\left[\sigma_{\operatorname{sub}}^{(b)}\right] & = \frac{2 + (-1)^b\operatorname{Tr}\!\left[U_0\rho U_1^\dag \right] + (-1)^b\operatorname{Tr}\!\left[U_1\rho U_0^{\dagger} \right]}{4}\\
    & = \frac{2 + (-1)^b\operatorname{Tr}\!\left[\left(U_1^\dag U_0  +  U_0^{\dagger}U_1\right) \rho \right]}{4}\label{eq:fun_pri_prob}.
\end{align}

\begin{figure}
    \centering
    \includegraphics[width=0.5\linewidth]{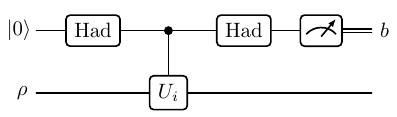}
    \caption{Quantum primitive for estimating  $\frac{1}{2}\operatorname{Tr}\!\left[\left(U_1^\dag U_0  +  U_0^{\dagger}U_1\right)\rho\right]$.}
    \label{fig:qc-primitive}
\end{figure}

Now to estimate the quantity, given by~\eqref{eq:U0_U1}, we employ the following approach. Let $b_1, \ldots, b_N$ represent the measurement results obtained from $N$ independent executions of the quantum circuit mentioned above. We define a random variable $X$ as $X \coloneqq (-1)^b$. The sample mean
\begin{equation}
    \overline{X} \coloneqq \frac{1}{N}\sum_{i=1}^{N} X_{i} 
\end{equation}
serves as an unbiased estimator for the quantity of interest because
\begin{align}
    \mathbb{E}\!\left[\,\overline{X}\, \right] & = \mathbb{E}\!\left[\frac{1}{N}\sum_{i=1}^{N} X_{i}\right] = \frac{1}{N}\sum_{i=1}^{N} \mathbb{E}\!\left[X_{i}\right]= \frac{1}{N}\sum_{i=1}^{N} \sum_{b_{i} \in \{0, 1\}}p_{b_{i}} (-1)^{b_{i}} \\
    & = \frac{1}{N}\sum_{i=1}^{N}\left ( \frac{2 + \operatorname{Tr}\!\left[\left(U_1^\dag U_0  +  U_0^{\dagger}U_1\right)\rho \right]}{4} - \frac{2 - \operatorname{Tr}\!\left[\left(U_1^\dag U_0  +  U_0^{\dagger}U_1\right)\rho \right]}{4}\right)\\
    & = \frac{1}{2} \operatorname{Tr}\!\left[\left(U_1^\dag U_0  +  U_0^{\dagger}U_1\right)\rho \right] \label{eq:prim-ub-final}.
\end{align}

\section{Estimating the first term of the Fisher--Bures information matrix elements}

\label{app:algo-first-term-FB}

Let us first recall, from the statement of  Theorem~\ref{thm:main}, the expression for the $(i, j)$-th element of the Fisher--Bures information matrix $I^{\operatorname{FB}}(\theta)$:
\begin{equation}
I_{ij}^{\operatorname{FB}}(\theta)=\frac{1}{2}\operatorname{Tr}[\left\{
\Phi_{\theta}(G_{i}),\Phi_{\theta}(G_{j})\right\}  \rho(\theta)]-\left\langle
G_{i}\right\rangle _{\rho(\theta)}\left\langle G_{j}\right\rangle
_{\rho(\theta)}.\label{eq:Fisher-matrix-entries-app}
\end{equation}
As previously stated in the main text (paragraph surrounding~\eqref{eq:simple-term-to-estimate}), estimating the second term of the right-hand side of the above equation is  straightforward. So, in what follows, we present an algorithm for estimating the first term of the above equation in greater detail.

Consider the following:
\begin{align}
    & \frac{1}{2}\operatorname{Tr}[\left\{
\Phi_{\theta}(G_{i}),\Phi_{\theta}(G_{j})\right\}  \rho(\theta)]\nonumber\\
& = \frac{1}{2}\left(\operatorname{Tr}[
\Phi_{\theta}(G_{i})\Phi_{\theta}(G_{j})  \rho(\theta)] + \operatorname{Tr}[
\Phi_{\theta}(G_{j})\Phi_{\theta}(G_{i})  \rho(\theta)] \right)\\
& = \frac{1}{2}\Bigg(\operatorname{Tr}\left[
\int_{\mathbb{R}}\int_{\mathbb{R}} dt_1\ dt_2\ p(t_1) p(t_2) e^{iG(\theta)t_1} G_i e^{-iG(\theta)t_1} e^{iG(\theta)t_2} G_j e^{-iG(\theta)t_2}\rho(\theta)\right]\notag \\
& \qquad+ \operatorname{Tr}\left[
\int_{\mathbb{R}}\int_{\mathbb{R}} dt_1\ dt_2\ p(t_1) p(t_2) e^{iG(\theta)t_2} G_j e^{-iG(\theta)t_2} e^{iG(\theta)t_1} G_i e^{-iG(\theta)t_1}\rho(\theta)\right]\Bigg)\\
& = \frac{1}{2}\int_{\mathbb{R}}\int_{\mathbb{R}} dt_1\ dt_2\ p(t_1) p(t_2) \Bigg(\operatorname{Tr}\left[
e^{iG(\theta)t_1} G_i e^{-iG(\theta)t_1} e^{iG(\theta)t_2} G_j e^{-iG(\theta)t_2}\rho(\theta)\right]\notag \\
& \qquad+ \operatorname{Tr}\left[
e^{iG(\theta)t_2} G_j e^{-iG(\theta)t_2} e^{iG(\theta)t_1} G_i e^{-iG(\theta)t_1}\rho(\theta)\right]\Bigg)\\
& = \frac{1}{2}\int_{\mathbb{R}}\int_{\mathbb{R}} dt_1\ dt_2\ p(t_1) p(t_2) \Bigg(\operatorname{Tr}\left[
G_i e^{-iG(\theta)t_1} e^{iG(\theta)t_2} G_j e^{-iG(\theta)t_2}\rho(\theta)e^{iG(\theta)t_1} \right]\notag \\
& \qquad+ \operatorname{Tr}\left[e^{iG(\theta)t_2}
G_j e^{-iG(\theta)t_2} e^{iG(\theta)t_1} G_i \rho(\theta) e^{-iG(\theta)t_1}\right]\Bigg)\\
& = \frac{1}{2}\int_{\mathbb{R}}\int_{\mathbb{R}} dt_1\ dt_2\ p(t_1) p(t_2) \Bigg(\operatorname{Tr}\left[
G_i e^{-iG(\theta)t_1} e^{iG(\theta)t_2} G_j e^{-iG(\theta)t_2}e^{iG(\theta)t_1} \rho(\theta)\right]\notag \\
& \qquad+ \operatorname{Tr}\left[e^{-iG(\theta)t_1}e^{iG(\theta)t_2}
G_j e^{-iG(\theta)t_2} e^{iG(\theta)t_1} G_i \rho(\theta)\right]\Bigg)\\
& = \frac{1}{2}\int_{\mathbb{R}}\int_{\mathbb{R}} dt_1\ dt_2\ p(t_1) p(t_2) \Bigg(\operatorname{Tr}\left[
G_i e^{-iG(\theta)(t_1-t_2)} G_j e^{iG(\theta)(t_1-t_2)}\rho(\theta)\right]\notag \\
& \qquad+ \operatorname{Tr}\left[e^{-iG(\theta)(t_1-t_2)}
G_j e^{iG(\theta)(t_1-t_2)} G_i \rho(\theta)\right]\Bigg)\\
& = \int_{\mathbb{R}}\int_{\mathbb{R}} dt_1\ dt_2\ p(t_1) p(t_2) \Bigg(\frac{1}{2}\operatorname{Tr}\Big[\Big(
G_i e^{-iG(\theta)(t_1-t_2)} G_j e^{iG(\theta)(t_1-t_2)}\notag \\
& \qquad+ e^{-iG(\theta)(t_1-t_2)}
G_j e^{iG(\theta)(t_1-t_2)} G_i\Big) \rho(\theta)\Big]\Bigg)\label{eq:first_fb_simplify}.
\end{align}

We are now in a position to present an algorithm (Algorithm~\ref{algo:first-term-FB}) to estimate the first term of~\eqref{eq:Fisher-matrix-entries-app} using its equivalent form mentioned above. At the core of our algorithm lies a quantum primitive (refer to Appendix~\ref{app:quant_primitive} for more details) that estimates a quantity of the form given by
\begin{equation}
    \frac{1}{2}\operatorname{Tr}\!\left[\left(U_1^\dag U_0  +  U_0^{\dagger}U_1\right)\rho\right].
\end{equation}
Note that by substituting $U_1  =  G_i e^{-iG(\theta)(t_1-t_2)} G_j, U_0  = e^{-iG(\theta)(t_1-t_2)}$, and $\rho = \rho(\theta)$ into the equation above, we recover the expression inside the integral of~\eqref{eq:first_fb_simplify}. To this end, the quantum circuit of the primitive with these values of $U_0$ and $U_1$ is depicted in Figure~\ref{fig:QBFB-estimator}.

\begin{algorithm}[H]
\caption{\texorpdfstring{$\mathtt{estimate\_first\_term\_FB}(i, j, \theta, \{G_\ell\}_{\ell=1}^{J}, \varepsilon, \delta)$}{estimate first term}}
\label{algo:first-term-FB}
\begin{algorithmic}[1]
\STATE \textbf{Input:} Indices $i, j \in [J]$, parameter vector $\theta = \left( \theta_{1}, \ldots,  \theta_{J}\right)^{\mathsf{T}} \in \mathbb{R}^{J}$, Gibbs local Hamiltonians $\{G_\ell\}_{\ell=1}^{J}$, precision $\varepsilon > 0$, error probability $\delta \in (0,1)$
\STATE $N \leftarrow \lceil\sfrac{2 \ln(\sfrac{2}{\delta})}{\varepsilon^2}\rceil$
\FOR{$n = 0$ to $N-1$}
\STATE Initialize the control register to $|0\rangle\!\langle 0 |$
\STATE Prepare the system register in the state $\rho(\theta)$
\STATE Sample $t_1$ and $t_2$ independently at random with probability $p(t)$ (defined in~\eqref{eq-mt:prob-dens})
\STATE Apply the Hadamard gate to the control register
\STATE Apply the following unitaries to the control and system registers:
\STATE \hspace{0.6cm} \textbullet~Controlled-$G_i$: $G_i$ is a local unitary with the control on the control register
\STATE \hspace{0.6cm} \textbullet~$e^{-iG(\theta)(t_1 - t_2)}$: Hamiltonian simulation for time $t_1 - t_2$ on the system register
\STATE \hspace{0.6cm} \textbullet~Controlled-$G_j$: $G_j$ is a local unitary with the control on the control register
\STATE Apply the Hadamard gate to the control register
\STATE Measure the control register in the computational basis and store the measurement outcome~$b_n$
\STATE $Y_{n}^{(\operatorname{FB})} \leftarrow (-1)^{b_n}$
\ENDFOR
\RETURN $\overline{Y}^{(\operatorname{FB})} \leftarrow \frac{1}{N}\sum_{n=0}^{N-1}Y_{n}^{(\operatorname{FB})}$
\end{algorithmic}
\end{algorithm}

We now show that the output of Algorithm~\ref{algo:first-term-FB}, i.e., $\overline{Y}^{(\operatorname{FB})}$, is an unbiased estimator of the first term of~\eqref{eq:Fisher-matrix-entries-app}:
\begin{align}
    \mathbb{E}\!\left [\overline{Y}^{(\operatorname{FB})}\right]&  = \mathbb{E}\!\left [\frac{1}{N}\sum_{n=0}^{N-1}Y_n^{(\operatorname{FB})}\right] = \mathbb{E}\!\left [\frac{1}{N}\sum_{n=0}^{N-1} (-1)^{b_n} \right]\\
    & = \frac{1}{N}\sum_{n=0}^{N-1}\mathbb{E}\!\left [  (-1)^{b_n} \right] = \frac{1}{N}\sum_{n=0}^{N-1}\sum_{b_{n} \in \{0, 1\}}p_{b_{n}}\left[ (-1)^{b_{n}} \right]\label{eq:exp-sim-FB}.
\end{align}
Observe that 
\begin{align}
    p_{b_n} \coloneqq  \int_{\mathbb{R}} \int_{\mathbb{R}} dt_1 \, dt_2\  p(t_1)  p(t_2)\left ( \frac{2 + (-1)^{b_n}\operatorname{Tr}\!\left[\left(U_1^\dag U_0  +  U_0^{\dagger}U_1\right) \rho(\theta) \right]}{4} \right).
\end{align}
This is because in Algorithm~\ref{algo:first-term-FB}, we first sample $t_1$ and $t_2$ independently at random with probability $p(t)$ and then apply the primitive introduced in Appendix~\ref{app:quant_primitive} with $U_1  =  G_i e^{-iG(\theta)(t_1-t_2)} G_j, U_0  = e^{-iG(\theta)(t_1-t_2)}$, whose probability of outputting a bit $b$ is given by~\eqref{eq:fun_pri_prob}. Substituting the above expression for $p_{b_n}$ into~\eqref{eq:exp-sim-FB}, we get
\begin{align}
    & \mathbb{E}\!\left [\overline{Y}^{(\operatorname{FB})}\right]\nonumber\\
    &  = \frac{1}{N}\sum_{n=0}^{N-1}\sum_{b_{n} \in \{0, 1\}}\left(\int_{\mathbb{R}} \int_{\mathbb{R}} dt_1 \, dt_2\  p(t_1)  p(t_2)\left ( \frac{2 + (-1)^{b_n}\operatorname{Tr}\!\left[\left(U_1^\dag U_0  +  U_0^{\dagger}U_1\right) \rho(\theta) \right]}{4} \right)\right)\left[ (-1)^{b_{n}} \right]\\
    & = \int_{\mathbb{R}} \int_{\mathbb{R}} dt_1 \, dt_2\  p(t_1)  p(t_2) \left(\frac{1}{N}\sum_{n=0}^{N-1}\sum_{b_{n} \in \{0, 1\}}\left ( \frac{2 + (-1)^{b_n}\operatorname{Tr}\!\left[\left(U_1^\dag U_0  +  U_0^{\dagger}U_1\right) \rho(\theta) \right]}{4} \right)\left[ (-1)^{b_{n}} \right]\right)\\
    & = \int_{\mathbb{R}} \int_{\mathbb{R}} dt_1 \, dt_2\  p(t_1)  p(t_2) \left( \frac{1}{2}\operatorname{Tr}\!\left[\left(U_1^\dag U_0  +  U_0^{\dagger}U_1\right)\rho(\theta)\right] \right)\\
    & = \int_{\mathbb{R}}\int_{\mathbb{R}} dt_1\ dt_2\ p(t_1) p(t_2) \Bigg(\frac{1}{2}\operatorname{Tr}\Big[\Big(
G_i e^{-iG(\theta)(t_1-t_2)} G_j e^{iG(\theta)(t_1-t_2)}\notag \\
& \qquad+ e^{-iG(\theta)(t_1-t_2)}
G_j e^{iG(\theta)(t_1-t_2)} G_i\Big) \rho(\theta)\Big]\Bigg),
\end{align}
where the third equality follows directly from~\eqref{eq:prim-ub-final}.

\section{Estimating the first term of the Kubo-Mori information matrix elements}

\label{app:algo-first-term-KM}

To begin, we recall from the statement of Theorem~\ref{thm:main-Kubo-Mori} the expression for the $(i, j)$-th element of the Kubo-Mori information matrix:
\begin{equation}
I_{ij}^{\operatorname{KM}}(\theta)=\frac{1}{2}\left \langle\left\{
G_{i},\Phi_{\theta}(G_{j})\right\}\right \rangle_{\rho(\theta)}  -\left\langle G_{i}
\right\rangle _{\rho(\theta)}\left\langle G_{j}\right\rangle _{\rho(\theta
)},\label{eq:KM-info-matrix-entries-app}
\end{equation}
Here, as well, the estimation of the second term of the above equation is relatively straightforward. Therefore, in what follows, we focus on estimating the first term. Consider the following:
\begin{align}
    \frac{1}{2}\left \langle\left\{
G_{i},\Phi_{\theta}(G_{j})\right\}\right \rangle_{\rho(\theta)} & = \frac{1}{2}\int_{\mathbb{R}} dt \ p(t) \operatorname{Tr}\!\left[\left\{G_i,e^{-i G(\theta)t}G_j  e^{i G(\theta)t} \right\}\rho(\theta) \right]\\
 & = \frac{1}{2}\int_{\mathbb{R}} dt \ p(t)\operatorname{Tr}\!\left[\left(G_i e^{-i G(\theta)t}G_j  e^{i G(\theta)t} + e^{-i G(\theta)t}G_j  e^{i G(\theta)t} G_i\right) \rho(\theta) \right]
\end{align}
In a similar vein to our previous approach for estimating the first term of the Fisher-Bures information matrix elements, we employ the primitive presented in Appendix~\ref{app:quant_primitive} here as well. This time, we set $U_1 = G_i e^{-i G(\theta)t}G_j$ and $U_0 = e^{-i G(\theta)t}$, and the quantum circuit of the primitive with these values of $U_0$ and $U_1$ is depicted in Figure~\ref{fig:QBKM-estimator}. To this end, we present an algorithm (Algorithm~\ref{algo:first-term-KM}) for estimating $\frac{1}{2}\left \langle\left\{
G_{i},\Phi_{\theta}(G_{j})\right\}\right \rangle_{\rho(\theta)}$. This algorithm outputs an unbiased estimator $\overline{Y}^{(\operatorname{KM})}$ of the quantity of interest, and the proof of this unbiasedness follows similarly to the analysis of Algorithm~\ref{algo:first-term-FB} shown in Appendix~\ref{app:algo-first-term-FB}.

\begin{algorithm}[H]
\caption{\texorpdfstring{$\mathtt{estimate\_first\_term\_KM}(i, j, \theta, \{G_\ell\}_{\ell=1}^{J}, \varepsilon, \delta)$}{estimate first term}}
\label{algo:first-term-KM}
\begin{algorithmic}[1]
\STATE \textbf{Input:} Indices $i, j \in [J]$, parameter vector $\theta = \left( \theta_{1}, \ldots,  \theta_{J}\right)^{\mathsf{T}} \in \mathbb{R}^{J}$, Gibbs local Hamiltonians $\{G_\ell\}_{\ell=1}^{J}$, precision $\varepsilon > 0$, error probability $\delta \in (0,1)$
\STATE $N \leftarrow \lceil\sfrac{2 \ln(\sfrac{2}{\delta})}{\varepsilon^2}\rceil$
\FOR{$n = 0$ to $N-1$}
\STATE Initialize the control register to $|0\rangle\!\langle 0|$
\STATE Prepare the system register in the state $\rho(\theta)$
\STATE Sample $t$ with probability $p(t)$ (defined in~\eqref{eq-mt:prob-dens})
\STATE Apply the Hadamard gate to the control register
\STATE Apply the following unitaries to the control and system registers:
\STATE \hspace{0.6cm} \textbullet~Controlled-$G_i$: $G_i$ is a local unitary with the control on the control register
\STATE \hspace{0.6cm} \textbullet~$e^{-iG(\theta)t}$: Hamiltonian simulation for time $t$ on the system register
\STATE \hspace{0.6cm} \textbullet~Controlled-$G_j$: $G_j$ is a local unitary with the control on the control register
\STATE Apply the Hadamard gate to the control register
\STATE Measure the control register in the computational basis and store the measurement outcome~$b_n$
\STATE $Y_{n}^{(\operatorname{KM})} \leftarrow (-1)^{b_n}$
\ENDFOR
\RETURN $\overline{Y}^{(\operatorname{KM})} \leftarrow \frac{1}{N}\sum_{n=0}^{N-1}Y_{n}^{(\operatorname{KM})}$
\end{algorithmic}
\end{algorithm}

\section{Proof of Theorem~\ref{thm:main-sld}}

\label{app:proof-SLD}

Recall that the $(k, \ell)$-th element of the SLD operator $L^{(j)}(\theta)$ is given as follows:
\begin{align}
    L_{k\ell}^{(j)}(\theta) = \frac{2\langle k|\partial_{j}\rho(\theta
)|\ell\rangle}{\lambda
_{k}+\lambda_{\ell}}.
\end{align}
Plugging~\eqref{eq:basic-calc-deriv-HC} into the above equation, we find that
\begin{align}
    L_{k\ell}^{(j)}(\theta) & = \frac{2\left(  -\frac{1}{2}\langle k|\Phi_{\theta}(G_{j})|\ell\rangle\left(
\lambda_{k}+\lambda_{\ell}\right)  +\delta_{k,\ell}\lambda_{\ell}\left\langle
G_{j}\right\rangle _{\rho(\theta)}\right) }{\lambda
_{k}+\lambda_{\ell}}\\
& = \frac{ -\langle k|\Phi_{\theta}(G_{j})|\ell\rangle\left(
\lambda_{k}+\lambda_{\ell}\right)}{\lambda
_{k}+\lambda_{\ell}} + \frac{2\delta_{k,\ell}\lambda_{\ell}\left\langle
G_{j}\right\rangle _{\rho(\theta)}}{\lambda
_{k}+\lambda_{\ell}}\\
& = -\langle k|\Phi_{\theta}(G_{j})|\ell\rangle + \frac{2\lambda_{k}\left\langle
G_{j}\right\rangle _{\rho(\theta)}}{2\lambda
_{k}}\delta_{k,\ell}\\
& = -\langle k|\Phi_{\theta}(G_{j})|\ell\rangle + \left\langle
G_{j}\right\rangle _{\rho(\theta)}\delta_{k,\ell}.
\end{align}
This implies that
\begin{align}
    L^{(j)}(\theta) & = \sum_{k, \ell}L_{k\ell}^{(j)}(\theta) |k\rangle\!\langle\ell|\\
    &= \sum_{k, \ell}\left(-\langle k|\Phi_{\theta}(G_{j})|\ell\rangle + \left\langle
G_{j}\right\rangle _{\rho(\theta)}\delta_{k,\ell}\right) |k\rangle\!\langle\ell|\\
& = -\Phi_{\theta}(G_{j}) + \left\langle
G_{j}\right\rangle _{\rho(\theta)} I.
\end{align}

\end{document}